%
%\PassOptionsToPackage{square,comma,numbers,sort&compress}{natbib}
%\PassOptionsToPackage{square,comma,numbers}{natbib}
\documentclass[sigconf,preprint,prologue,table, dvipsnames]{acmart}
\usepackage{xcolor}
\usepackage[utf8]{inputenc} % allow utf-8 input
\usepackage[T1]{fontenc}    % use 8-bit T1 fonts
\usepackage{hyperref}       % hyperlinks
\usepackage{url}            % simple URL typesetting
\usepackage{booktabs}       % professional-quality tables
\usepackage{amsfonts}       % blackboard math symbols
\usepackage{nicefrac}       % compact symbols for 1/2, etc.
\usepackage{microtype}      % microtypography
\usepackage{verbatim}
\usepackage{graphicx}
\usepackage{amsmath}
\usepackage{algorithm}
\usepackage{color}
\usepackage{epstopdf}
\usepackage[applemac]{inputenx}
\usepackage{algpseudocode}
\usepackage{array}
\usepackage[title]{appendix}
\usepackage{amsthm}
\usepackage{mathtools}
\usepackage{subcaption}
\usepackage{wrapfig}

\newtheorem{theorem}{Theorem}[section]
\newtheorem{corollary}{Corollary}[theorem]
\newtheorem{lemma}[theorem]{Lemma}

\title{Improving precision of A/B experiments using trigger intensity}

%\author{Anonymous Author}
\author{Tanmoy Das}
\email{tanmdas@amazon.com}
\affiliation{%
  \institution{Amazon}
  \country{USA}
}
\author{Dohyeon Lee}
\email{dohnlee@amazon.com}
\affiliation{%
  \institution{Amazon}
  \country{USA}
}
\author{Arnab Sinha}
\email{arsinha@amazon.com}
\affiliation{%
  \institution{Amazon}
  \country{USA}
}

\begin{abstract}
In industry, online randomized controlled experiment (a.k.a. A/B experiment) is a standard approach to measure the impact of a causal change. These experiments have small treatment effect to reduce the potential blast radius. As a result, these experiments often lack statistical significance due to low signal-to-noise ratio. A standard approach for improving the precision (or reducing the standard error) focuses only on the \textit{trigger} observations, where the output of the treatment and the control model are different. Although evaluation with full information about trigger observations (\textit{full knowledge}) improves the precision, detecting all such trigger observations is a costly affair. In this paper, we propose a sampling based evaluation method (\textit{partial knowledge}) to reduce this cost. The randomness of sampling introduces bias in the estimated outcome. We theoretically analyze this bias and show that the bias is inversely proportional to the number of observations used for sampling. We also compare the proposed evaluation methods using simulation and empirical data. In simulation, bias in evaluation with partial knowledge effectively reduces to zero when a limited number of observations $(\le 0.1\%)$ are sampled for trigger estimation. In empirical setup, evaluation with partial knowledge reduces the standard error by 36.48\%.

\end{abstract}

\setlength{\parskip}{0.01cm}
\setlength{\parindent}{1em}
\setlength{\textfloatsep}{0.01cm}

\begin{document}

\maketitle

\section{Introduction} 
In industry, online randomized controlled experiment, known as A/B experiment, \cite{power_of_exp, bakshy2014designingdeployingonlinefield, xu_infra_to_culture_ab_test, kaufman2017democratizingonlinecontrolledexperiments, kohavi_practical_guide, diane_overlapping_exp, kohavi_trustworthy, xie_improving_sensitivity} is conducted to evaluate the impact of any causal change. The main challenge in conducting A/B experiment in industrial setup is the low signal-to-noise ratio \cite{xu_infra_to_culture_ab_test, Smith2019} as changes made in the treatment model are incremental in nature that impacts only a few observations. Hence, these experiments have higher standard error (lower precision) and lacks statistical significance. Most of the time, there are missing opportunities to roll out good treatments to production which would improve customer experience and overall revenue.

A general approach \cite{deng_diluted_treatment} to improve precision considers only the \textit{trigger} observations. In a trigger observation, the output of the control and the treatment model are different. Therefore, it is reasonable to assume that the treatment effect is restricted only to these trigger observations.

An evaluation method with full information (\textit{full knowledge}) about trigger observations would be the ideal solution. However, detecting all trigger observations for an experiment is expensive as there can be billions of observations \cite{amzn_traffic_stat_1}, \cite{amzn_traffic_stat_2} per day when conducting an A/B experiment.

In this paper, we propose an alternative solution. Instead of detecting all trigger observations, we use a sampling based approach (\textit{partial knowledge}) where we randomly select a subset of observations and determine their trigger status. Trigger information from these samples are used in evaluation.

Obviously, this sampling based approach introduces bias in the evaluation outcome. But our theoretical analysis shows that this bias reduces linearly as the number of samples increases. Thus, we believe it is a promising approach. The theoretical findings are confirmed by simulated data. 

This new evaluation method is deployed in an online A/B testing platform to test its performance. In the empirical analysis, evaluation with partial knowledge decreases the standard error of the evaluation outcome by 36.48\% without any detectable bias in the estimated treatment effect.
This paper has the following major contributions:\\
\indent \textbf{1)} To the best of our knowledge, this is the first work that introduces the idea of A/B experiment evaluations with sampled trigger observations.
\\\indent \textbf{2)} The performance of the proposed evaluation method is analyzed and compared with other baseline methods theoretically. 
\\\indent \textbf{3)} Our theoretical analysis is further validated by the simulation and empirical data collected from a real A/B experiment platform.

\section{Related Works}
The design and analysis of A/B experiments is a well-studied subject in statistics \cite{box2005statistics, gerber2012field}. Due to its effectiveness in the detection of causal change, A/B experiment is a widely used mechanism in industry when making data-driven decision\cite{power_of_exp, bakshy2014designingdeployingonlinefield, xu_infra_to_culture_ab_test, kaufman2017democratizingonlinecontrolledexperiments, kohavi_practical_guide, diane_overlapping_exp, kohavi_trustworthy, xie_improving_sensitivity}. There are several works \cite{xu_infra_to_culture_ab_test, kaufman2017democratizingonlinecontrolledexperiments, kohavi_practical_guide, diane_overlapping_exp} that discuss the proper guidelines for conducting A/B experiments in industrial settings, which includes method for randomization, experiment design, engineering infrastructure, and choice of metrics. They also list a number of challenges that are difficult to solve. 

One of the vexing problem for any A/B experiment is the issue of low precision, which is widely known as \textit{sensitivity} problem \cite{Smith2019, kohavi_practical_guide, xu_infra_to_culture_ab_test}. Precision is measured as the inverse of the variance of the evaluation results. Larger variance (lower precision) makes it harder to detect any change caused by the treatment model and prevents the launch of a good feature to the production.

There are two typical solutions to resolve the issue of low precision: increasing the number of samples and choose a treatment that has large impact. However, it is not always easy to increase the number of samples as there are multiple parallel experiments running simultaneously. Also, finding a treatment with big effect size is not always possible because most changes are incremental in nature. 

In the literature, there are three approaches proposed to improve precision when conventional solutions fail -- perform trigger analysis, better evaluation methods to reduce variance, and design evaluation metrics with lower variance.

The concept of trigger analysis is proposed in \cite{xu_infra_to_culture_ab_test, kohavi_practical_guide, deng_improve_sensitivity}. These works consider only those products/users, that actually participated in the experiment, during evaluation. 

In \cite{deng_diluted_treatment}, authors look into the more granular trigger data, where they not only consider those products/customers participating in an experiment, but also the number of trigger observations associated with a product/customer. Such granular information provide valuable insights, but the cost of gathering such detailed information becomes prohibitive at scale with millions of experimental participants. In our solution, we use sampling to reduce the cost and analyze the impact of the sampling noise on the evaluation results.
 
Two methods for improving precision are proposed in \cite{deng_improve_sensitivity}, which are also analyzed and extended by \cite{xie_improving_sensitivity}. The first method involves a clustering based on pre-experiment covariates to reduce the between cluster variance, which leads to lower variance for aggregated result. The second method involves inclusion of pre-experiment covariates in the regression for evaluation that can also minimize the variance. Our proposed method is similar to these approaches, and it shows better outcome when implemented in conjunction with these variance reduction techniques.

Developing methods to design new evaluation metrics are discussed in 
\cite{kharitonov_learning_sensitive, deng_data_driven}. These evaluation metrics have lower variance. Our evaluation method is agnostic of evaluation metric. But the problem with any derived metric is the low interpretability, which makes it hard to explain the experiment results.
\vspace{-3mm}
\section{System overview}
In this section, we provide examples of trigger and non-trigger observations. Also, an overview of the process for detecting trigger observations.

Suppose, there is an e-commerce retailer with a large product inventory. There are multiple images associated with a product and on the product description page, these images need to be displayed in a specific order. The list of images is context dependent and can change from one customer visit to another, which can change the ranking too.

Ranking of these images is an important task as showing good images at the top of the website improves customer response, which can be measured by number of customer visits, number of products brought, etc. A customer visit to the product website is denoted as an observation. Customer response for all observations is recorded for evaluation purpose.

A ranking model is used to determine the ranking order for these images. There is already an old version of ranking model deployed in production, which is known as control model. An updated image ranking model is proposed, which is known as the treatment model. 

Although the treatment and the control model are not the same, there are observations where the ranking of images produced by treatment and control models are the same. These are known as \textit{non-trigger} observations. In contrast, a \textit{trigger} observation is the one where the ranking of images produced by treatment and control model are different. Please see Figure \ref{fig:trig_obs_fig} for an example.

Before rolling out the treatment model into the production, we need to make sure that the treatment is better than the control. To compare their performance, we conduct A/B experiment \cite{power_of_exp}. 

There are two possible ways to conduct the A/B experiment in this scenario: product randomization and customer randomization. Let us talk about product-randomized A/B experiments first, and later we discuss the customer randomized experiments.
\\\noindent\textbf{Product-randomization:} For product-randomized experiments, all products are randomly divided into two mutually exclusive groups -- control and treatment. Random division ensures that products assigned to treatment and control groups have similar characteristics. 

When customer visits a product from the control group, images associated with that product are ranked by the control model. Likewise, when customer visits a product from treatment group, treatment model is used to rank images. 

Irrespective of a product's assignment to control/treatment group, we can determine the trigger status of an observation associated with a product. The \textit{counterfactual} ranking output is generated in the backend and compared with actual ranking output that is stored in a database.

For a control product, ranking produced by the control model is used to rank images on the product website. In the backend, the \textit{counterfactual}, which is the ranking generated by the treatment model, is computed. These two rankings are compared to determine the trigger status of the observation.

We can use the same process to determine the trigger status of an observation associated with a treatment product. Please see Figure \ref{fig:trig_obs_fig} for a detailed example. 

The trigger detection can run in online or offline mode. In online mode, both treatment and control ranking outputs are generated simultaneously and compared. In offline mode, counterfactual output is generated later and compared with actual output that is stored in a database.

\noindent\textbf{Customer-randomization:} As mentioned before, it is possible to use customer-randomized experiments in this scenario, where customers are divided into control and treatment groups. The control model is used to rank images when a control customer visits the product description page, while the treatment model is used for treatment customers. Based on this setup, we can redefine the trigger observations for control and treatment customers. 

The choice of product vs customer randomized experiments depends on the business use case. But in this paper, we assume the underlying A/B experiment uses product-randomization. However, the evaluation methods presented here are equally applicable to customer-randomized experiments as well.

It should be noted that when number of observations are in the order of billions per day \cite{amzn_traffic_stat_1}, \cite{amzn_traffic_stat_2}, determining the trigger status for all observations has a huge cost. This cost is attributed to the re-computation of the ranking output for all observations. Specifically, for control observation, we use the control model to compute the ranking output to display images. In this case, we need to re-compute the ranking output using the treatment model again and compare these two outputs to determine the trigger status. Similar re-computation needs to be performed for treatment products. 

\begin{figure}[h]
    \centering
    \includegraphics[width=0.48\textwidth]{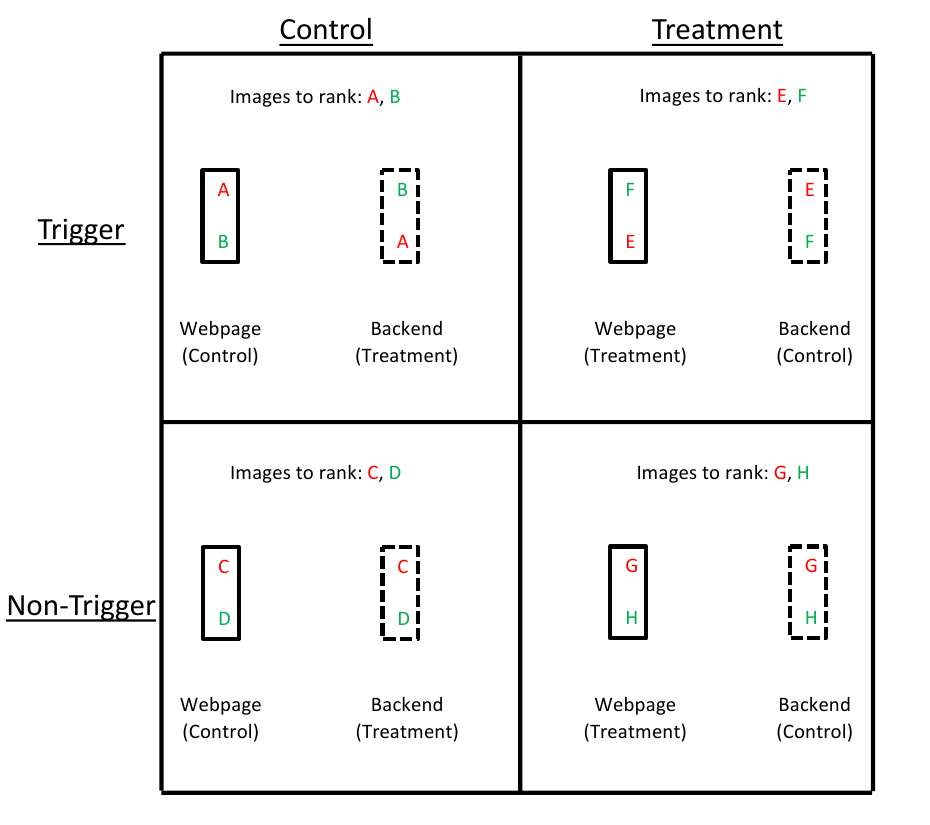}
    \caption{Illustrative example of trigger vs non-trigger observations. Here, we show four observations. Two for control products (left) and two for treatment products (right). Top observations are examples of triggers and the bottom examples are non-triggers. \textit{In top left}, there are two images (\textcolor{red}{A} and \textcolor{green}{B}) to rank. As the product is in control, ranking for product webpage, which is visible to customer, is produced by the control model. \textcolor{red}{A} has the highest rank so it's placed on the top. In the backend, the same images are ranked by the treatment model. As the control and treatment outputs are different. This observation is denoted as trigger. \textit{In bottom left}, the control and treatment model output is the same, so this is a non-trigger observation. Similar analysis is done for treatment product (top right and bottom right). For treatment product, treatment model is used to rank images in webpage and control model is used in backend to determine trigger status.}
    \label{fig:trig_obs_fig}
    \vspace{-5mm}
\end{figure}
\begin{table}[h]
    \centering
    \begin{tabular}{|c|c|}
        \hline
        \textbf{Symbol} & \textbf{Definition} \\
        \hline
        $n_{i}$ & number of observations for $i^{th}$ product\\
        \hline
        $y_{ij}$ & customer response for $i^{th}$ product and\\
         & $j^{th}$ observation\\
        \hline
         $y_{i}$ & average customer response for $i^{th}$ product\\
        \hline
         $r_{ij}$ & trigger status for $i^{th}$ product and\\
         & $j^{th}$ observation\\
        \hline
        $r_{i}$ & average trigger status for $i^{th}$ product\\
        & known as \textit{trigger intensity}\\
        \hline
        $r_i^{\prime}$ & estimated trigger intensity for $i^{th}$ product\\
        \hline
        $\epsilon_i$ & error in estimated trigger intensity \\
        & for $i^{th}$ product\\
        \hline
        $\mathcal{T}_i$ & treatment assignment for $i^{th}$ product\\
        \hline
        $\rho$ & treatment effect\\
        \hline
        $\rho^{\prime}$ & treatment effect when using estimated trigger intensity\\
        \hline
        $\eta$, $\eta^{\prime}$, $\eta_{\alpha}$ & indicates noise\\
        \hline
        $\hat{ }$ & hat symbol indicates estimated parameters\\ 
        \hline
        $\textbf{X}$ & bold symbols indicate matrix and vectors\\ 
        \hline
    \end{tabular}
    \caption{Definition of mathematical symbols.}
    \label{tab:cust_response}
    \vspace{-3mm}
\end{table}

\section{Customer response model}
In this section, let us define the customer response model formally. We use this model for our theoretical analysis.

The treatment model's impact is assumed to affect only trigger observations, since control and treatment model outputs differ only for trigger observations. For non-trigger observations, customer response remains similar between control and treatment products.

A linear function is preferred \cite{deng_improve_sensitivity, xie_improving_sensitivity, athey2018design, Rosenblum2009-ao,} for modeling customer response in A/B experiments due to its robustness and interpretability. Also, random product assignment ensures each treatment item has a similar control counterpart, allowing a linear model to effectively capture any differences in customer response between treatment and control products.

Table \ref{tab:cust_response} shows how the customer response changes for trigger vs non-trigger observations that are associated with control and treatment products. Here, $\mathcal{T}_i$ indicates the treatment assignment status. $\mathcal{T}_i = 1$ means $i^{th}$ product is in treatment. $r_{ij}$ is a binary variable that indicates the trigger status of the $j^{th}$ observation for $i^{th}$ product.  $r_{ij} = 1$ indicates this is a trigger observation.

\begin{table}[h]
    \centering
    \begin{tabular}{|c|c|c|}
        \hline
        & Non-Trigger $(r_{ij} = 0)$ & Trigger $(r_{ij} = 1)$ \\
        \hline
        Control $(\mathcal{T}_i = 0)$ & $\beta_0$ & $\beta_0 + \beta_1$ \\
        \hline
        Treatment $(\mathcal{T}_i = 1)$ & $\beta_0$ & $\beta_0 + \beta_1 + \beta_2$ \\
        \hline
    \end{tabular}
    \caption{ $\beta_0$ is the average customer response for any non-trigger observations. Trigger control observations are different from non-trigger control observations; otherwise they would not be triggered. $\beta_1$ is the additional change in customer response in this case. Thus, customer response for a control trigger observation is $\beta_0 + \beta_1$. Customer response for treatment trigger observations is different from control trigger observations because the treatment displays different output to customer. $\beta_2$ is the additional change in customer response in this case. Hence, customer response for treatment trigger observation is $\beta_0 + \beta_1 + \beta_2$}
    \label{tab:cust_response}
    \vspace{-5mm}
\end{table}
Based on this relationship (Table \ref{tab:cust_response}), we can model the customer response using a linear function as follows. Suppose, $i^{th}$ product has $n_i$ observations. For $j^{th}$ observation associated with $i^{th}$ product, the customer response is $y_{ij}$.
\begin{equation}
\label{eqn:gv_trig}
y_{ij} = \beta_0 + \beta_1 r_{ij} + \beta_2\mathcal{T}_ir_{ij} + \eta_{ij}
\end{equation}
Here, $\eta_{ij}$ is the noise. Since, average customer response for trigger/non-trigger observations does not change, we aggregate observations at the product level. It should be noted that for a customer-randomized experiment, observations would be aggregated at customer level.
\begin{equation}
\label{eqn:dgp}
y_{i} = \beta_0 + \beta_1 r_{i} + \beta_2\mathcal{T}_ir_{i} + \eta_{i}
\end{equation}
Here, $y_i = \frac{1}{n_i}\sum\limits_{j=1}^{n_i} y_{ij}$. $y_i$ is the average customer response for $i^{th}$ product over $n_i$ observations. Similarly, $r_{i} = \frac{1}{n_i}\sum\limits_{j=1}^{n_i} r_{ij}$ and $\eta_{i} = \frac{1}{n_i}\sum\limits_{j=1}^{n_i} \eta_{ij}$. Here, $r_i$ is the average trigger rate across all observations for the $i^{th}$ product. Henceforth, it is known as \textit{trigger intensity}.

If we know values for all $r_{ij}$, we could easily determine the values for $r_i$ and use them to estimate those parameters in Eq \ref{eqn:dgp}. However, detecting all trigger observations $(r_{ij} = 1)$ during an experiment is cost-prohibitive.

An alternative approach for reducing this cost is sampling a smaller subset of observations and perform the re-computation on these observations to estimate the trigger intensity. This estimated trigger intensity ($r_i^{\prime}$) can also be used to evaluate the performance of the treatment model. 

\section{Proposed evaluation methods}
This section presents a theoretical analysis of three evaluation methods: 1) the \textit{baseline} method without trigger information; 2) evaluation with \textit{full knowledge} of trigger intensities; and 3) our proposed approach using \textit{partial knowledge}. Our key contribution demonstrates that in the partial knowledge method, bias in estimated treatment effect and standard error decrease linearly with the number of samples used for trigger intensity estimation.

%In this section, we theoretically evaluate the performance of three evaluation methods. First, \textit{baseline} evaluation method that does not use any information about trigger intensity. Second, an evaluation method where we know the trigger intensity for all products. We call this evaluation with \textit{full knowledge}. Third, an evaluation method that uses the sampling based estimated trigger intensity. This is known as evaluation with \textit{partial knowledge}.

\subsection{Baseline evaluation: no knowledge of trigger intensity}
\label{sec:baseline_model}
In this section, we present a baseline evaluation method that does not use any information about trigger intensity. We use this model as a baseline to compare the performance of models that utilize the trigger intensity. In particular, this model assumes all units are impacted by the treatment in the same way. 

In this scenario, customer response for all control products are the same. Likewise, customer response for all treatment products are also the same. The difference between the customer response for control and treatment products is caused by the treatment.

\begin{align}
\label{eqn:baseline_eval}
y_{i} = \alpha_0 + \alpha_1 \mathcal{T}_i + \eta_{i}
\end{align}
Here, the customer response for control product is $\alpha_0$ and the additional change in customer response for the treatment product is $\alpha_1$. Hence, $\alpha_1$ represents the average difference in customer response in between the control and the treatment products. It is popularly known as \textit{average treatment effect (ATE)}. 

$\alpha_1$ is estimated using the Ordinary Least Square (OLS) method. This estimation is unbiased when the noise term $(\eta_{i})$ is i.i.d and zero mean Gaussian.
\begin{theorem}
\label{th:baseline_eval}
    With no knowledge of trigger intensity\\
    a) The estimated average treatment effect is
    \begin{align}
    \hat{\alpha_1} = E[y_i | \mathcal{T}_i = 1] - E[y_i | \mathcal{T}_i = 0]
    \end{align}
    b) Suppose, the variance of the residual is $\sigma^2(\hat{\eta_{\alpha}})$. The variance of the estimated treatment effect is
    \begin{align}
    \label{eqn:baseline_model_var}
    \sigma^2 \left( \hat{\alpha_1} \right) = 4\sigma^2(\hat{\eta_{\alpha}})
    \end{align}
\end{theorem}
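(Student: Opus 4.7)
The plan is to treat Equation~\ref{eqn:baseline_eval} as a simple OLS regression with a single binary regressor and use the closed-form OLS formulas. Part~(a) follows from the geometry of fitting a step function to two groups; part~(b) follows from the variance formula for the OLS slope estimator, specialized to a balanced A/B design.

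For part (a), since $\mathcal{T}_i \in \{0,1\}$, the fitted surface only takes two values, so the OLS loss decomposes as a sum of within-group squared errors. Minimizing within each group forces $\hat{\alpha}_0$ to equal the sample mean of $y_i$ among the control products and $\hat{\alpha}_0 + \hat{\alpha}_1$ to equal the sample mean among the treatment products. Subtracting gives $\hat{\alpha}_1 = \bar y_T - \bar y_C$, which, under random assignment and i.i.d.\ sampling, is the sample analog of $E[y_i \mid \mathcal{T}_i = 1] - E[y_i \mid \mathcal{T}_i = 0]$. Equivalently, one can write down the $2\times 2$ normal equations $X^T X \hat{\boldsymbol{\alpha}} = X^T y$ for $X = [\mathbf{1}, \mathcal{T}]$ and invert explicitly; both routes give the same answer.

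For part (b), I would exploit that $\bar y_T$ and $\bar y_C$ are computed over disjoint sets of products and are therefore independent, so $\sigma^2(\hat{\alpha}_1) = \sigma^2(\bar y_T) + \sigma^2(\bar y_C)$. Under the standard balanced A/B split with $N_T = N_C = N/2$ and i.i.d.\ noise of per-observation variance $\sigma^2$, each group mean has variance $2\sigma^2/N$, so the two pieces sum to $4\sigma^2/N$. To match the stated identity, I would interpret $\sigma^2(\hat{\eta}_{\alpha})$ as the variance of the averaged residual (i.e., $\sigma^2/N$), giving $\sigma^2(\hat{\alpha}_1) = 4\sigma^2(\hat{\eta}_{\alpha})$. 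An equivalent derivation is to compute the lower-right entry of $\sigma^2 (X^T X)^{-1}$ for the balanced design and read off the factor of $4/N$.

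The main obstacle is not the algebra but the bookkeeping of conventions behind the factor $4$. That constant only appears under a 50/50 split; under an unbalanced design with fraction $p$ treated, the variance would be $\sigma^2/(p(1-p)N)$, which equals $4\sigma^2/N$ only at $p=1/2$. Similarly, the symbol $\sigma^2(\hat{\eta}_{\alpha})$ must denote the variance of the mean residual rather than that of a single residual for the clean identity to hold. I would state both conventions explicitly at the start of the proof so that the final one-line equality $\sigma^2(\hat{\alpha}_1) = 4\sigma^2(\hat{\eta}_{\alpha})$ is unambiguous.
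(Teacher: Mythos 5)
Your proposal is correct and essentially matches the paper's proof: the paper derives part (a) by inverting the $2\times 2$ normal equations for $\mathbf{X}=[\mathbf{1},\mathbf{t}]$ under an implicit 50/50 split ($E[\mathcal{T}_i]=\tfrac12$) and part (b) by reading off the lower-right entry of $\left(\tfrac{1}{N}\mathbf{X}^T\mathbf{X}\right)^{-1}$ times $\sigma^2(\hat{\eta}_\alpha)$, which is one of the two equivalent routes you describe. Your explicit remark that the factor $4$ requires the balanced design and that $\sigma^2(\hat{\eta}_\alpha)$ must be read as the $\tfrac{1}{N}$-normalized residual variance correctly identifies the two conventions the paper uses without stating.
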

\begin{proof}
    Proof is in Appendix.
\end{proof}
The problem with the baseline method is that it assumes all treatment products are equally impacted by an experiment. However, it is not a realistic scenario. 

As an instance, there are popular products \cite{item_popularity}, \cite{product_success_popularity} where number of customer visits are much higher than other products. It is reasonable to assume these products are going to have many trigger observations, thus more treatment impact.

In the next section, we present an evaluation method that can overcome this pitfall.

\subsection{Evaluation: full knowledge of trigger intensity}
\label{sec:full_know}
In most practical scenarios, products have varying number of observations. As a result, the number of trigger observations as well as the trigger intensity are also going to vary from one product to another. 

Trigger intensity is a measure of the treatment intensity for a product, and it is well known \cite{kelley_treatment_intensity, yoder_treatment_intensity} that by accounting for changes in treatment intensity, we can reduce the variance of the estimated treatment effect.

Hence, it is beneficial to use the trigger intensity information in the evaluation method so that we can precisely estimate the treatment impact. We use the same linear model as defined in Eqn \ref{eqn:dgp}. Based on this model, we can determine the ATE defined as $\rho$.

As products are randomly divided into treatment and control, we assume trigger intensity $(r_i)$ is independent of treatment assignment $(\mathcal{T}_i)$. 
\begin{theorem}
\label{th:ate_trig}
Average treatment effect $(\rho)$ is $\beta_2 E[r_i]$
\end{theorem}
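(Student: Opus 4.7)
The plan is to compute $\rho = E[y_i \mid \mathcal{T}_i = 1] - E[y_i \mid \mathcal{T}_i = 0]$ directly from the data generating process in Eq.~\ref{eqn:dgp} and simplify using the independence assumption between $r_i$ and $\mathcal{T}_i$ that is granted by random product assignment.

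First I would take conditional expectations of Eq.~\ref{eqn:dgp} on both sides given $\mathcal{T}_i = 1$ and $\mathcal{T}_i = 0$ respectively. Since the noise $\eta_i$ has zero mean (inherited from the i.i.d.\ zero-mean assumption on $\eta_{ij}$ stated just after Eq.~\ref{eqn:baseline_eval}) and is independent of the treatment assignment, the $\eta_i$ term drops out in each conditional expectation. This yields
\begin{align*}
E[y_i \mid \mathcal{T}_i = 1] &= \beta_0 + \beta_1 E[r_i \mid \mathcal{T}_i = 1] + \beta_2 E[r_i \mid \mathcal{T}_i = 1],\\
E[y_i \mid \mathcal{T}_i = 0] &= \beta_0 + \beta_1 E[r_i \mid \mathcal{T}_i = 0].
\end{align*}

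Next I would invoke the independence of $r_i$ and $\mathcal{T}_i$ stated in the paragraph preceding the theorem: random assignment of products implies $E[r_i \mid \mathcal{T}_i = 1] = E[r_i \mid \mathcal{T}_i = 0] = E[r_i]$. Subtracting the two conditional expectations, the $\beta_0$ terms cancel and the $\beta_1 E[r_i]$ terms cancel, leaving $\rho = \beta_2 E[r_i]$, as claimed.

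There is no real obstacle here; the derivation is essentially a linearity-of-expectation calculation combined with the randomization assumption. The only subtlety worth flagging is that the result relies on $r_i \perp \mathcal{T}_i$, which is only an approximation at finite sample size but exact in expectation under the random-assignment design; this is explicitly invoked in the statement preceding the theorem, so it is safe to use.
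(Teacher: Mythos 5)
Your proposal is correct and takes essentially the same route as the paper: both arguments reduce to linearity of expectation applied to Eq.~\ref{eqn:dgp}, the zero-mean noise assumption, and the randomization-induced independence $r_i \perp \mathcal{T}_i$. The only cosmetic difference is the order of conditioning --- the paper first derives the per-stratum effect $E[\rho_i \mid r_i] = \beta_2 r_i$ and then averages over $r_i$ by iterated expectation, whereas you marginalize over $r_i$ within each treatment arm directly and let the independence assumption equate $E[r_i \mid \mathcal{T}_i = 1]$ with $E[r_i \mid \mathcal{T}_i = 0]$; the two computations are equivalent.
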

\begin{proof}
Suppose, the treatment impact for the $i^{th}$ product is $\rho_i$. The average treatment effect for all products with trigger intensity $r_i$ is
\begin{align}
\label{eqn:avg_treat_effect_1}
E[\rho_i | r_i] &= E[y_{i} | \mathcal{T}_i = 1, r_i] - E[y_{i} | \mathcal{T}_i = 0, r_i]\\
&= (\beta_0 + \beta_1 r_i + \beta_2 r_i) - (\beta_0 + \beta_1 r_i)\\
&= \beta_2 r_i 
\end{align}
The average treatment effect across all products is
\begin{align}
\label{eqn:avg_treat_effect}
\rho = E[\rho_i] &= E[E[\rho_i | r_i]] = E[\beta_2 r_i] = \beta_2 E[r_i]
\end{align}
\end{proof}
We estimate the value of $\beta_2$ using OLS.
\begin{theorem}
\label{th:eval_all_obs_equal_and_full_trig_1}
With full knowledge of trigger intensity,\\
a) the estimated value of parameter $\beta_2$ is
\begin{align}
\hat{\beta}_2 &= \frac{1}{E[r_i^2]} \left[ E[y_i r_i | \mathcal{T}_i = 1] - E[y_i r_i | \mathcal{T}_i = 0] \right] 
\end{align}
b) suppose, the variance of the residuals is $\sigma^2\left( \hat{\eta} \right)$, the variance of estimated $\beta_2$ is
\begin{align}
\label{eqn:baseline_ate_var}
\sigma^2 \left( \hat{\beta}_2 \right) &= \frac{4}{E[r_i^2]} \sigma^2\left( \hat{\eta} \right)
\end{align}
\end{theorem}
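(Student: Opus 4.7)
The plan is to apply ordinary least squares directly to the data-generating process in Equation~\ref{eqn:dgp}. Stack observations into a design matrix $\mathbf{X}$ whose $i$-th row is $[1,\, r_i,\, \mathcal{T}_i r_i]$ and solve the normal equations $\mathbf{X}^T \mathbf{X}\,\hat{\boldsymbol{\beta}} = \mathbf{X}^T \mathbf{y}$ for the third coordinate $\hat{\beta}_2$. Throughout, I will exploit two structural facts implied by the randomized design: $P(\mathcal{T}_i = 1) = 1/2$, and $\mathcal{T}_i \perp r_i$ as asserted just before Theorem~\ref{th:ate_trig}. Together with $\mathcal{T}_i^2 = \mathcal{T}_i$ these yield the moment identities $E[\mathcal{T}_i r_i] = \tfrac{1}{2} E[r_i]$ and $E[\mathcal{T}_i r_i^2] = E[\mathcal{T}_i^2 r_i^2] = \tfrac{1}{2} E[r_i^2]$, which are the workhorses of both parts.

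For part (a), I would divide each of the three normal equations by $n$ and pass to population moments. Rather than inverting a $3\times 3$ matrix by hand, the convenient shortcut is to form (second normal equation) $-\,2 \times$ (third normal equation). Under the moment identities above, the coefficients multiplying $\hat{\beta}_0$ and $\hat{\beta}_1$ both vanish, leaving a single equation in $\hat{\beta}_2 E[r_i^2]$. Writing $E[\mathcal{T}_i y_i r_i] = \tfrac{1}{2} E[y_i r_i \mid \mathcal{T}_i = 1]$ and decomposing $E[y_i r_i]$ by conditioning on $\mathcal{T}_i$, the left-hand side collapses to a constant multiple of $E[y_i r_i \mid \mathcal{T}_i = 1] - E[y_i r_i \mid \mathcal{T}_i = 0]$, and solving recovers the stated formula.

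For part (b), I would use the standard OLS covariance identity and extract the $(3,3)$ entry of $(\mathbf{X}^T\mathbf{X})^{-1}$. Writing $a = E[r_i]$ and $b = E[r_i^2]$, the normalized moment matrix limit is
\begin{equation*}
M = \begin{pmatrix} 1 & a & a/2 \\ a & b & b/2 \\ a/2 & b/2 & b/2 \end{pmatrix}.
\end{equation*}
The $(3,3)$ cofactor is $b - a^2$, while a cofactor expansion along the bottom row gives $\det(M) = \tfrac{1}{4} b (b - a^2)$. Hence $(M^{-1})_{33} = 4/b = 4/E[r_i^2]$, and multiplying by $\sigma^2(\hat{\eta})$ under the same normalization convention already used for Theorem~\ref{th:baseline_eval} reproduces the claim.

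The part I expect to be the main obstacle is not conceptual but bookkeeping: keeping the six distinct mixed moments straight while repeatedly invoking $\mathcal{T}_i^2 = \mathcal{T}_i$ and $\mathcal{T}_i \perp r_i$ so that the $3 \times 3$ inverse collapses to a one-line expression. If the direct cofactor route becomes unwieldy, a Frisch--Waugh--Lovell argument --- partialling both $1$ and $r_i$ out of $\mathcal{T}_i r_i$ and regressing the residualized $y_i$ on that residualized regressor --- would give the same $(3,3)$ entry with slightly less algebra, and would make the role of the independence of $\mathcal{T}_i$ and $r_i$ more transparent.
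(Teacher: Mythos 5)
Your proposal is correct and follows essentially the same route as the paper's appendix proof: OLS on the design matrix $\left[ \mathbf{1}, \mathbf{r}, \mathbf{r} \odot \mathbf{t} \right]$, the same population moment identities from $P(\mathcal{T}_i=1)=\tfrac{1}{2}$, $\mathcal{T}_i^2=\mathcal{T}_i$, and $\mathcal{T}_i \perp r_i$, and the $(3,3)$ entry $4/E[r_i^2]$ of the normalized moment matrix inverse. The only difference is bookkeeping --- you extract $\hat{\beta}_2$ by the row combination and a single cofactor rather than writing out the full $3\times 3$ inverse as the paper does, and both computations check out.
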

\begin{proof}
    Proof is in Appendix.
\end{proof}
This estimated $\hat{\beta}_2$ is unbiased as long as the noise term in Eq \ref{eqn:dgp} is i.i.d and zero mean Gaussian. We can compute the value of $E[r_i]$ as we have the full knowledge of the product trigger intensity. Based on this information, we can estimate the ATE and its variance.

\begin{corollary}
\label{th:eval_all_obs_equal_and_full_trig_2}
With full knowledge of product trigger intensity,\\
a) the estimated ATE $(\hat{\rho})$
    \begin{align}
    \hat{\rho} &= \frac{E[r_i]}{E[r_i^2]} \left[ E[y_i r_i | \mathcal{T}_i = 1] - E[y_i r_i | \mathcal{T}_i = 0] \right]
    \end{align}
b) the variance of the estimated ATE $(\sigma^2 \left( \hat{\rho} \right))$ is
    \begin{align}
    \label{eqn:ate_var_full_know}
    \sigma^2 \left( \hat{\rho} \right) &= \sigma^2 \left( \hat{\beta}_2 \right) E[r_i]^2 \\
    &= \frac{4 E[r_i]^2}{E[r_i^2]} \sigma^2\left( \hat{\eta} \right)
    \end{align}
\end{corollary}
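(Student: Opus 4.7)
The plan is to treat the corollary as a direct consequence of Theorem \ref{th:ate_trig} and Theorem \ref{th:eval_all_obs_equal_and_full_trig_1}, so most of the work is already done and the proof becomes a short plug-in argument. First I would invoke Theorem \ref{th:ate_trig} to write $\rho = \beta_2\, E[r_i]$, and then define the natural plug-in estimator $\hat{\rho} = \hat{\beta}_2\, E[r_i]$, noting that under the full-knowledge assumption the quantity $E[r_i]$ is computed directly from the observed trigger intensities and is therefore treated as a known deterministic scalar rather than a random variable with its own sampling error. Substituting the closed form for $\hat{\beta}_2$ from Theorem \ref{th:eval_all_obs_equal_and_full_trig_1}(a) then yields part (a) immediately.

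For part (b), I would use the fact that for any deterministic constant $c$ and random variable $Z$, $\sigma^2(cZ) = c^2 \sigma^2(Z)$. Applying this with $c = E[r_i]$ and $Z = \hat{\beta}_2$ gives $\sigma^2(\hat{\rho}) = E[r_i]^2 \, \sigma^2(\hat{\beta}_2)$, and then the explicit expression from Theorem \ref{th:eval_all_obs_equal_and_full_trig_1}(b) produces the final $\frac{4 E[r_i]^2}{E[r_i^2]} \sigma^2(\hat{\eta})$.

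The only place where care is needed, and hence the main obstacle, is the justification for treating $E[r_i]$ as non-random when computing the variance. Strictly speaking, with a finite sample one replaces $E[r_i]$ with the sample mean $\bar{r}$, which carries its own variability and is correlated with $\hat{\beta}_2$ through the shared $r_i$'s. I would address this by appealing to the full-knowledge regime, where the population moments $E[r_i]$ and $E[r_i^2]$ are assumed known (or equivalently, by conditioning on the observed trigger intensities $\{r_i\}$, which are covariates in the regression of Eq. \ref{eqn:dgp} and are therefore fixed in the usual OLS variance calculation). Under either framing the cross-term drops out and the clean identity in part (b) is recovered, completing the proof.
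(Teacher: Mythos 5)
Your proposal is correct and matches the paper's own (one-line) proof, which likewise derives the corollary by plugging $\hat{\beta}_2$ from Theorem \ref{th:eval_all_obs_equal_and_full_trig_1} into $\hat{\rho} = \hat{\beta}_2 E[r_i]$ via Theorem \ref{th:ate_trig} and scaling the variance by $E[r_i]^2$. Your extra remark about why $E[r_i]$ can be treated as a known constant in the full-knowledge regime is a reasonable clarification of a point the paper leaves implicit, not a departure from its argument.
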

\begin{proof}
    The proof follows from Theorem \ref{th:ate_trig} and \ref{th:eval_all_obs_equal_and_full_trig_1}.
\end{proof}

The model with full knowledge of product trigger intensity has a smaller variance for the estimated ATE in comparison to the baseline model (Section \ref{sec:baseline_model}) because the proposed model takes into account the product level variations in the treatment impact. 

The variance of residual for baseline method $(\sigma^2(\hat{\eta_{\alpha}}))$ is larger than the variance of residual for the proposed method with trigger intensity, $(\sigma^2\left( \hat{\eta} \right))$ as the proposed method uses trigger intensity as a covariate. 

Suppose, $\sigma^2(\hat{\eta_{\alpha}}) = h \sigma^2\left( \hat{\eta} \right), h \ge 1$. The ratio of variance for ATE in the proposed model and the baseline model is $\frac{\sigma^2 \left( \hat{\rho} \right)}{\sigma^2 \left( \hat{\alpha_1} \right)} = \frac{E[r_i]^2}{h E[r_i^2]}$ from Eq \ref{eqn:baseline_ate_var} and \ref{eqn:ate_var_full_know}.

When $\sigma^2(r_i) = E[r_i^2] -  E[r_i]^2 > 0$, $\frac{E[r_i]^2}{E[r_i^2]} < 1$. Thus, the variance for the ATE is always smaller for the proposed model when $\sigma^2(r_i) > 0$. The ratio $\frac{E[r_i]^2}{E[r_i^2]}$ is minimized when the $\sigma^2(r_i)$ is maximized. For any given $E[r_i] = \frac{n}{N}, 1 \le n \le N$, we can show that $\sigma^2(r_i)$ is maximized when exactly $n$ products have $r_i = 1$ and rest of them have $r_i = 0$, which means $\frac{E[r_i]^2}{E[r_i^2]} = \frac{n}{N}$. Thus, for any given $E[r_i] = \frac{n}{N}$, the maximum reduction in variance is $\frac{n}{hN}$. This reduction becomes larger as $n$ reduces.

When $r_i$ is the same for all products, $\sigma^2(r_i) = 0$ and  $\frac{E[r_i]^2}{E[r_i^2]} = 1$. In this case, the reduction in variance is $\frac{1}{h}$. A special case happens, when $r_i = 1$ for all products. In particular, all observations are triggered. In this case, Eq \ref{eqn:baseline_eval} and Eq \ref{eqn:dgp} are equivalent. The variance of residuals for both methods is the same and $h = 1$. For this special case, the variance of ATE for both methods is also the same.

In the other extreme, when $r_i = 0$ for all products, there is no trigger observations. Hence, there is no treatment effect and ATE is zero by definition.

\subsection{Evaluation: partial knowledge of trigger intensity}
As mentioned before, determining the trigger status $(r_{ij})$ of all observations is expensive, hence we cannot accurately determine the value of product trigger intensity $(r_i)$. But we can estimate the value of $r_i$ using an inexpensive method that can be noisy. 

Assume, the estimated value is $r_i^{\prime}$. Suppose, $r_i^{\prime} = r_i + \epsilon_i$. In this case, the estimation error ($\epsilon_i$) in the product trigger intensity becomes part of the noise as $\epsilon_i$ is unknown \cite{pischke_lectures}, \cite{hyslop_imbens_measurement_error}. As products are randomly divided into control and treatment groups, it is safe to assume that $r_i^{\prime} \bot T_i$ and $\epsilon_i \bot T_i$. Further, we also assume that the error $(\epsilon_i)$ and the true product trigger intensity $(r_i)$ is uncorrelated $(E[\epsilon_i r_i] = 0)$.
%We assume that $r_i^{\prime} \bot T_i$. This assumption along with our previous assumption that $r_i \bot T_i$ implies that $T_i \bot \epsilon_i$.

Let us re-write Eq \ref{eqn:dgp} to illustrate the impact of $\epsilon_i$. 
\begin{align}
y_{i} &= \beta_0 + \beta_1 [r_i^{\prime} + \epsilon_i] + \beta_2 \mathcal{T}_i [r_i^{\prime} + \epsilon_i] + \eta_i \\
&= \beta_0 + \beta_1 r_i^{\prime} + \beta_2 \mathcal{T}_i r_i^{\prime} +  \eta_i^{\prime}
\end{align}
The impact of $\epsilon_i$ on the outcome ($y_i$) cannot be determined, as the value of $\epsilon_i$ is unknown. The error $\epsilon_i$ contributes to noise $\eta_i^{\prime} = [\beta_1 + \beta_2 \mathcal{T}_i] \epsilon_i  + \eta_i$.

\begin{theorem}
\label{th:ate_part_trig}
     With partial knowledge of product trigger intensity,\\ 
     a) the estimated value of parameter $\beta_2$ is
     \begin{align}
    \hat{\beta_2^{\prime}} &= \frac{1}{E[(r_i^{\prime})^2]} \left[ E[y_i r_i^{\prime} | \mathcal{T}_i = 1] - E[y_i r_i^{\prime} | \mathcal{T}_i = 0] \right] 
    \end{align}
    b) the bias in the estimated value is
    \begin{align}
    \label{eqn:beta_2_prime_and_beta_2}
    E[\hat{\beta_2^{\prime}}] &= \beta_2 \left[ 1 - \frac{ E[\epsilon_i r_i^{\prime}]}{E[(r_i^{\prime})^2]} \right]
    \end{align}
    c) suppose, $\sigma^2 (\hat{\eta^{\prime}})$ is the residual variance and $\sigma^2 (\eta)$ is the noise variance.  When number of products ($N$) is large
    \begin{align}
        \sigma^2(\hat{\eta^{\prime}}) \approx \sigma^2(\eta) + \left[ \left( \beta_1 + \frac{\beta_2}{2} \right)^2 + \frac{\beta_2^2}{4} \right] E[\epsilon_i^2]
    \end{align}
    d) the variance of the estimated $\beta_2$ is
    \begin{align}
        \sigma^2 \left( \hat{\beta_2^{\prime}} \right) = \frac{4}{E[r_i^{{\prime}^2}]}  \sigma^2 (\hat{\eta^{\prime}})
    \end{align}
\end{theorem}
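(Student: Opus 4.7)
The overall plan is to mirror the OLS derivation that produced Theorem \ref{th:eval_all_obs_equal_and_full_trig_1}, simply regressing on $r_i^{\prime}$ instead of $r_i$, and then track how the measurement error $\epsilon_i = r_i^{\prime} - r_i$ leaks into (i) the conditional expectation of the estimator, (ii) the residual variance, and (iii) through the variance formula, into the variance of $\hat{\beta_2^{\prime}}$. For part (a), I would reuse the normal-equations argument from the full-knowledge proof verbatim with $r_i^{\prime}$ as the covariate; nothing about the algebra of that derivation needs $r_i^{\prime}$ to equal the true trigger intensity, so the closed form $\hat{\beta_2^{\prime}} = (E[(r_i^{\prime})^2])^{-1}\bigl(E[y_i r_i^{\prime} \mid \mathcal{T}_i = 1] - E[y_i r_i^{\prime} \mid \mathcal{T}_i = 0]\bigr)$ drops out immediately.

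For part (b), I would substitute the true data-generating process $y_i = \beta_0 + \beta_1 r_i + \beta_2 \mathcal{T}_i r_i + \eta_i$ into the estimator from (a), take expectations conditional on $\mathcal{T}_i$, and use $\mathcal{T}_i \perp (r_i, r_i^{\prime})$ together with $E[\eta_i \mid \mathcal{T}_i, r_i^{\prime}] = 0$. The $\beta_0$ and $\beta_1$ contributions are the same in both groups and cancel, leaving $E[\hat{\beta_2^{\prime}}] = \beta_2 E[r_i r_i^{\prime}]/E[(r_i^{\prime})^2]$. Writing $r_i = r_i^{\prime} - \epsilon_i$ then gives $E[r_i r_i^{\prime}] = E[(r_i^{\prime})^2] - E[\epsilon_i r_i^{\prime}]$, which yields the stated bias expression in Eq.~\ref{eqn:beta_2_prime_and_beta_2}.

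For part (c), I would start from the reparametrized noise $\eta_i^{\prime} = (\beta_1 + \beta_2 \mathcal{T}_i)\epsilon_i + \eta_i$ and compute its variance under the standing assumptions that $\eta_i \perp \epsilon_i$, that $\mathcal{T}_i$ is balanced Bernoulli$(1/2)$ independent of $\epsilon_i$, and that $E[\epsilon_i] = 0$ (which is the reasonable counterpart of $E[\epsilon_i r_i]=0$). Conditioning on $\mathcal{T}_i$, the variance decomposes as $\sigma^2(\eta) + E[(\beta_1 + \beta_2 \mathcal{T}_i)^2]\, E[\epsilon_i^2]$, and a direct computation gives $E[(\beta_1 + \beta_2 \mathcal{T}_i)^2] = \tfrac{1}{2}\beta_1^2 + \tfrac{1}{2}(\beta_1+\beta_2)^2 = (\beta_1 + \beta_2/2)^2 + \beta_2^2/4$, matching the stated form. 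The large-$N$ qualifier is what allows replacing the sample residual variance $\sigma^2(\hat{\eta^{\prime}})$ with its population counterpart and dropping lower-order LLN/cross-product terms. Part (d) is then a plug-in: the OLS variance formula used in Theorem \ref{th:eval_all_obs_equal_and_full_trig_1}(b) carries over verbatim with $r_i^{\prime}$ and $\hat{\eta^{\prime}}$ in place of $r_i$ and $\hat{\eta}$, keeping the factor of $4$ that comes from $\mathcal{T}_i$ being Bernoulli$(1/2)$.

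The main obstacle I expect is the variance decomposition in part (c): the effective noise mixes two random quantities, $\epsilon_i$ and $\mathcal{T}_i$, multiplicatively, so one must be careful about which expectations are conditional and which marginal, and must also justify dropping mean-shift terms under the mild $E[\epsilon_i]\approx 0$ assumption. Verifying the algebraic identity $\tfrac{1}{2}\beta_1^2 + \tfrac{1}{2}(\beta_1+\beta_2)^2 = (\beta_1+\beta_2/2)^2 + \beta_2^2/4$ is what ties the natural ``average over $\mathcal{T}_i$'' expression to the asymmetric form actually stated in the theorem, and is the step most likely to be glossed over in the appendix.
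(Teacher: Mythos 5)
Your proposal is correct and arrives at all four stated formulas, but parts (b) and (c) take a genuinely different route from the paper. The paper works entirely in matrix form: it sets $\tilde{\mathbf{X}} = \mathbf{X} + \mathbf{E}$, obtains the bias from the decomposition $\hat{\boldsymbol{\beta}}^{\prime} = \boldsymbol{\beta} - ( \tilde{\mathbf{X}}^{T} \tilde{\mathbf{X}} )^{-1} \tilde{\mathbf{X}}^{T} \mathbf{E} \boldsymbol{\beta} + ( \tilde{\mathbf{X}}^{T} \tilde{\mathbf{X}} )^{-1} \tilde{\mathbf{X}}^{T} \boldsymbol{\eta}$ by explicitly evaluating $\frac{1}{N}\tilde{\mathbf{X}}^{T}\mathbf{E}$, and proves part (c) by writing the residuals as $\hat{\boldsymbol{\eta}^{\prime}} = \mathbf{A}(\boldsymbol{\eta} - \mathbf{E}\boldsymbol{\beta})$ with $\mathbf{A}$ the idempotent annihilator matrix and invoking $\text{tr}(\mathbf{A}) = N-3$. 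You instead substitute the true data-generating process into the scalar closed form from (a) and use $E[r_i r_i^{\prime}] = E[(r_i^{\prime})^2] - E[\epsilon_i r_i^{\prime}]$ for (b), and for (c) compute the population variance of the composite noise $\eta_i^{\prime} = (\beta_1+\beta_2\mathcal{T}_i)\epsilon_i + \eta_i$; your identity $\frac{1}{2}\beta_1^2 + \frac{1}{2}(\beta_1+\beta_2)^2 = (\beta_1+\beta_2/2)^2 + \beta_2^2/4$ is exactly the quadratic form $\boldsymbol{\beta}^{T} E[\mathbf{E}^{T}\mathbf{E}]\boldsymbol{\beta}/(Nq)$ that the paper evaluates via Eq.~\ref{eqn:e_a_e^t}. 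Your route is more elementary and makes the provenance of each term transparent; the paper's buys the full coefficient vector (hence the biases of $\hat{\beta}_0^{\prime}$ and $\hat{\beta}_1^{\prime}$ as well) in one pass. One caveat on (c): the minimized residual variance is strictly smaller than $\mathrm{Var}(\eta_i^{\prime})$, because the projection absorbs the component of $\mathbf{E}\boldsymbol{\beta}$ that is correlated with $\tilde{\mathbf{X}}$ --- this absorption is precisely the attenuation bias of part (b), and it is not a vanishing LLN cross term but a correction of second order in $E[\epsilon_i^2]$. The paper's own step $E[\mathbf{E}^{T}\mathbf{A}\mathbf{E}] = \text{tr}(\mathbf{A})\,E[\mathbf{E}^{T}\mathbf{E}]/N$ silently drops the same correction, so you land on the identical stated approximation; it would be worth a sentence acknowledging that the neglected term is second-order rather than claiming it vanishes.
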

\begin{proof}
    Proof is in Appendix
\end{proof}
As $E[r_i \epsilon_i] = 0$, we can re-write $E[\epsilon_i r_i^{\prime}] = E[\epsilon_i^2] > 0$. It means there is a downward bias in the estimated $\hat{\beta_2^{\prime}}$. The residuals $(\hat{\eta})$ are impacted by the noise $(\epsilon_i)$ in the estimated trigger intensity and the variance of residuals $(\sigma^2 (\hat{\eta^{\prime}}))$ is greater than the variance of the noise $(\sigma^2 (\eta))$ in observation. Using this $\hat{\beta_2^{\prime}}$, we can estimate the ATE with partial knowledge of trigger intensity defined as $\hat{\rho^{\prime}}$. We also estimate the bias and variance of $\hat{\rho^{\prime}}$.
\begin{corollary}
    With partial knowledge of product trigger intensity,\\
    a) the estimated ATE 
    \begin{align}
    \hat{\rho^{\prime}} = \hat{\beta_2^{\prime}} E[r_i^{\prime}]
    \end{align}
    b) the bias in the estimated ATE is
    \begin{align}
    E[\hat{\rho^{\prime}}] = \rho + \beta_2 \left( E[\epsilon_i] -  \frac{ E[\epsilon_i r_i^{\prime}]E[r_i^{\prime}]}{E[{r_i^{\prime}}^2]} \right)
    \end{align}
    c) the variance of the estimated ATE is
    \begin{align}
        \sigma^2 \left( \hat{\rho^{\prime}} \right) = \frac{4 E[r_i^{\prime}]^2}{E[r_i^{{\prime}^2}]}  \sigma^2 (\hat{\eta^{\prime}})
    \end{align}
\end{corollary}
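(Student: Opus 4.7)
The plan is to derive all three parts by composing Theorem \ref{th:ate_trig} with Theorem \ref{th:ate_part_trig}, exactly paralleling the way Corollary \ref{th:eval_all_obs_equal_and_full_trig_2} was obtained in the full-knowledge case. The key observation is that $\hat{\rho^{\prime}}$ is just a deterministic rescaling of $\hat{\beta_2^{\prime}}$ by the constant $E[r_i^{\prime}]$, so essentially all of the probabilistic content has already been packaged into Theorem \ref{th:ate_part_trig}, and what remains is scalar arithmetic together with the identity $E[r_i^{\prime}] = E[r_i] + E[\epsilon_i]$.

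For part (a), I would start from Theorem \ref{th:ate_trig}, which gives $\rho = \beta_2 E[r_i]$. Under partial knowledge the unobservable $r_i$ is replaced by its proxy $r_i^{\prime}$, and $\beta_2$ by its noisy estimate $\hat{\beta_2^{\prime}}$, so the natural plug-in estimator is $\hat{\rho^{\prime}} = \hat{\beta_2^{\prime}}\, E[r_i^{\prime}]$, which is precisely the formula in (a). For part (b), I would take expectations on both sides of this identity: since $E[r_i^{\prime}]$ is a population-level constant, $E[\hat{\rho^{\prime}}] = E[\hat{\beta_2^{\prime}}]\, E[r_i^{\prime}]$. Substituting Theorem \ref{th:ate_part_trig}(b) gives $\beta_2 E[r_i^{\prime}] - \beta_2\, E[\epsilon_i r_i^{\prime}]\, E[r_i^{\prime}] / E[(r_i^{\prime})^2]$, and splitting $E[r_i^{\prime}] = E[r_i] + E[\epsilon_i]$ in the first term and recognising $\beta_2 E[r_i] = \rho$ rearranges this into $\rho + \beta_2\left(E[\epsilon_i] - E[\epsilon_i r_i^{\prime}]\, E[r_i^{\prime}] / E[(r_i^{\prime})^2]\right)$, matching the claim. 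No new probabilistic content beyond Theorem \ref{th:ate_part_trig} is required.

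For part (c), pulling the constant $E[r_i^{\prime}]$ out of the variance operator gives $\sigma^2(\hat{\rho^{\prime}}) = E[r_i^{\prime}]^2\, \sigma^2(\hat{\beta_2^{\prime}})$, and Theorem \ref{th:ate_part_trig}(d) finishes the derivation in a single substitution, yielding $\frac{4 E[r_i^{\prime}]^2}{E[(r_i^{\prime})^2]}\sigma^2(\hat{\eta^{\prime}})$. The main obstacle, to the extent there is one, is justifying the treatment of $E[r_i^{\prime}]$ as a fixed known constant rather than as a quantity estimated from the same experimental data, which in principle would contribute an extra variance term through the delta method. This is resolved by appealing to the same large-$N$ regime already invoked in Theorem \ref{th:ate_part_trig}(c): the sample mean of $r_i^{\prime}$ concentrates sharply around $E[r_i^{\prime}]$, so the correction is lower-order and consistent with the paper's existing analytical conventions.
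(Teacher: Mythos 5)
Your proposal is correct and follows essentially the same route as the paper, which simply composes Theorem \ref{th:ate_trig} with Theorem \ref{th:ate_part_trig}; your expansion of $E[r_i^{\prime}] = E[r_i] + E[\epsilon_i]$ to recover $\rho$ in part (b) and the pull-out of the constant $E[r_i^{\prime}]$ in part (c) are exactly the intended (unstated) steps. Your side remark about $E[r_i^{\prime}]$ being treated as a known population constant is a fair observation about a convention the paper adopts implicitly, but it does not change the argument.
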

\begin{proof}
    Proof follows from Theorem \ref{th:ate_part_trig} and Theorem \ref{th:ate_trig}. 
\end{proof}
As the estimated $\hat{\beta_2^{\prime}}$ has bias, it also leads to bias in the estimated ATE $(\hat{\rho^{\prime}})$. Similarly, there is a bias in the estimated variance too. In the next section, we discuss a trigger intensity estimation method and present a more detailed analysis for the bias in the estimated ATE and its variance.

\section{Trigger intensity estimation method}
In this section, we present a possible solution for computing the product trigger intensity
and analyze its impact on the estimation bias of the ATE $(\hat{\rho}^{\prime})$ with the partial knowledge of product trigger intensity.
\subsection{Independent sampling to estimate the trigger intensity}
The $i^{th}$ product has $n_i$ observations. Suppose, we randomly sample $m, m < n_i$ out of these $n_i$ observations and compute outputs using both the treatment and the control models. These outputs are compared to determine if there is any change in the model output. As defined before, $r_{ij} = 1$ indicates that there is a difference in the output of the treatment and the control models. The estimated trigger intensity for $i^{th}$ product is $r_i^{\prime} = \frac{\sum\limits_{j=1}^{m} r_{ij}}{m}$.

The estimation error is $\epsilon_i = r_i^{\prime} - r_i$. The mean of the estimation error is zero $(E[\epsilon_i | r_i] = 0)$ and the variance is $\sigma^2 \left( \epsilon_i | r_i \right) = E[\epsilon_i^2 | r_i] = \frac{r_i (1 - r_i)}{m}$.
We can determine the expected value and the variance of the $\epsilon_i$.
\begin{lemma}
    \label{lemma:ub_ate_part_trig_1}
    The $E[\epsilon_i] = 0$ and $\sigma^2(\epsilon_i) = E[\epsilon_i^2] = \frac{E[r_i] - E[r_i^2]}{m_i}$
\end{lemma}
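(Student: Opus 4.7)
The plan is to obtain both claims by a straightforward application of the law of total expectation (the tower property) to the conditional moments that the paragraph preceding the lemma already establishes, namely $E[\epsilon_i \mid r_i] = 0$ and $E[\epsilon_i^2 \mid r_i] = r_i(1-r_i)/m$. These two conditional statements come from the fact that, given $r_i$, the trigger indicators $r_{ij}$ sampled for product $i$ are i.i.d.\ Bernoulli$(r_i)$, so $\sum_{j=1}^{m} r_{ij}$ is Binomial$(m, r_i)$ and $\epsilon_i = r_i^{\prime} - r_i$ is a centered sample mean whose conditional variance has the usual $r_i(1-r_i)/m$ form.

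First I would handle the mean. By the tower property, $E[\epsilon_i] = E\bigl[E[\epsilon_i \mid r_i]\bigr] = E[0] = 0$. This zero mean immediately collapses the variance into the raw second moment, since $\sigma^2(\epsilon_i) = E[\epsilon_i^2] - (E[\epsilon_i])^2 = E[\epsilon_i^2]$, so the two equalities in the second claim of the lemma fuse into the single task of computing $E[\epsilon_i^2]$.

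Next I would compute the second moment by the same tower argument, $E[\epsilon_i^2] = E\bigl[E[\epsilon_i^2 \mid r_i]\bigr] = E\bigl[r_i(1-r_i)/m\bigr]$, and then pull the constant $1/m$ out and apply linearity of expectation to $r_i - r_i^2$ to land at $(E[r_i] - E[r_i^2])/m$. This matches the stated right-hand side (modulo a mild notational mismatch between the $m$ of the preceding paragraph and the $m_i$ in the lemma, which I would reconcile by stating that $m$ is the per-product sample size).

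There is not really a deep obstacle here; the only care needed is to be explicit that the conditional variance formula rests on the independence of the $m$ sampled observations within a product, and that the unconditional distribution of $r_i$ across products is the relevant averaging measure in the outer expectation. Once those two points are stated, both parts of the lemma follow mechanically from the tower property and linearity of expectation.
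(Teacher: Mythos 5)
Your proposal is correct and follows exactly the paper's own route: the paper's proof is the one-line remark that the result ``follows from the iterated law of expectation,'' applied to the conditional moments $E[\epsilon_i \mid r_i] = 0$ and $E[\epsilon_i^2 \mid r_i] = r_i(1-r_i)/m$ stated just before the lemma. Your added care about the Binomial sampling justification and the $m$ versus $m_i$ notational mismatch is a useful clarification but does not change the argument.
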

\begin{proof}
    The proof follows from the iterated law of expectation.
\end{proof}
With the help of this knowledge, it is possible to compute an upper bound on the estimation bias for ATE as defined in Theorem \ref{th:ate_part_trig}. 
\begin{theorem}
    \label{th:ub_ate_part_trig}
    If $m$ (where $m \le n_i, \forall i$ and $m > 1$) observations for all products are examined to estimate the product trigger intensity and $\sigma^2 \left( r_i \right) > 0$,\\
    a) there is a downward bias in the estimated ATE $(\hat{\rho^{\prime}})$ and the bias is upper bounded as follows 
    \begin{align}
        \rho - E[\hat{\rho}^{\prime}] < \frac{\beta_2}{m - 1}
    \end{align}
    b) the variance of the estimated ATE $(\sigma^2 \left( \hat{\rho^{\prime}} \right))$ is larger than the variance of the estimated ATE $(\sigma^2 \left( \hat{\rho} \right))$ with full knowledge of trigger intensity. The difference has an upper bound as follows 
    \begin{align}
        \sigma^2 \left( \hat{\rho^{\prime}} \right) - \sigma^2 \left( \hat{\rho} \right) < \frac{1}{m} \left[ \left( \beta_1 + \frac{\beta_2}{2} \right)^2 + \frac{\beta_2^2}{4} \right]
    \end{align}
\end{theorem}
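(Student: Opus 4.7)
The plan is to substitute the moment identities of Lemma 1.2 together with the identities $E[r_i^{\prime}] = E[r_i]$, $E[\epsilon_i r_i^{\prime}] = E[\epsilon_i^2]$, and $E[{r_i^{\prime}}^2] = E[r_i^2] + E[\epsilon_i^2]$ (each a one-line consequence of $E[\epsilon_i]=0$ and $E[\epsilon_i r_i]=0$) into the bias and variance formulas already derived in the preceding corollary and in Theorem~\ref{th:ate_part_trig}. After these substitutions, both claims reduce to short Jensen-type arguments using only the boundedness $r_i \in [0,1]$ and the assumption $\sigma^2(r_i)>0$. Throughout I will write $a = E[r_i]$ and $b = E[r_i^2]$, so that $b \ge a^2$ by Jensen and $b \le a$ (since $r_i^2 \le r_i$ pointwise).

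For part (a), plugging the identities above into the corollary's bias formula collapses it to
\[
\rho - E[\hat{\rho}^{\prime}] \;=\; \beta_2 \cdot \frac{(a-b)\,a}{(m-1)\,b + a},
\]
which is strictly positive whenever $\beta_2 > 0$ (note $a > 0$ because $\sigma^2(r_i)>0$ precludes $r_i \equiv 0$, and $a > b$ when $\sigma^2(r_i)>0$ and $r_i \in [0,1]$), confirming the downward direction. The bound $< \beta_2/(m-1)$ then rearranges to the elementary inequality $(m-1)\bigl[a^2 - b(a+1)\bigr] < a$. Since $b \ge a^2$ and $a+1 \ge 1$, the bracket is non-positive, while $a > 0$ makes the right-hand side strictly positive, so the inequality holds.

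For part (b), I substitute the residual-variance approximation from part~(c) of Theorem~\ref{th:ate_part_trig} and use $\sigma^2(\hat{\eta}) = \sigma^2(\eta)$ (correct specification under full knowledge) to split the variance difference as
\[
\sigma^2(\hat{\rho}^{\prime}) - \sigma^2(\hat{\rho}) \;=\; 4\,E[r_i]^2 \sigma^2(\eta)\Bigl[\tfrac{1}{E[{r_i^{\prime}}^2]} - \tfrac{1}{E[r_i^2]}\Bigr] \;+\; \tfrac{4\,E[r_i]^2\,C\,E[\epsilon_i^2]}{E[{r_i^{\prime}}^2]},
\]
where $C = (\beta_1 + \beta_2/2)^2 + \beta_2^2/4$. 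Since $E[{r_i^{\prime}}^2] \ge E[r_i^2]$, the first bracket is non-positive and can be discarded; since $E[{r_i^{\prime}}^2] \ge E[r_i^{\prime}]^2 = E[r_i]^2$ (Jensen), the factor $E[r_i]^2/E[{r_i^{\prime}}^2] \le 1$, reducing the second term to at most $4\,C\,E[\epsilon_i^2]$. Finally, the pointwise bound $r_i(1-r_i) \le 1/4$ combined with Lemma~\ref{lemma:ub_ate_part_trig_1} gives $E[\epsilon_i^2] = E[r_i(1-r_i)]/m \le 1/(4m)$, yielding the claimed $C/m$ bound.

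The main obstacle is conceptual rather than computational: recognizing that the measurement error produces two competing effects on the variance—a noise-amplification term proportional to $C\,E[\epsilon_i^2]$ and a denominator-inflation term $1/E[{r_i^{\prime}}^2] < 1/E[r_i^2]$ that actually \emph{reduces} variance—and that one can safely discard the favorable offset to still obtain the clean $1/m$ rate via the pointwise bound $r_i(1-r_i)\le 1/4$. The corresponding subtlety in part~(a) is noticing that the $(m-1)$ in the denominator emerges naturally from the identity $E[{r_i^{\prime}}^2] = b + (a-b)/m = \bigl((m-1)b+a\bigr)/m$, which is what gives the $1/(m-1)$ rate rather than a weaker $1/m$ bound.
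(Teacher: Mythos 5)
Your proof is correct and follows essentially the same route as the paper's: both reduce the bias and variance expressions via the identities $E[r_i^{\prime}]=E[r_i]$, $E[\epsilon_i r_i^{\prime}]=E[\epsilon_i^2]=(E[r_i]-E[r_i^2])/m$, and $E[{r_i^{\prime}}^2]=E[r_i^2]+E[\epsilon_i^2]$, then close with Jensen ($E[r_i^2]\ge E[r_i]^2$) and the pointwise bound $r_i(1-r_i)\le \tfrac14$. The only differences are presentational --- you verify part (a) by a single polynomial rearrangement rather than the paper's chain of term-by-term relaxations, and in part (b) you isolate and discard the favorable denominator-inflation term explicitly --- but the ingredients and the resulting bounds are identical.
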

\begin{proof}
    Proof is in Appendix
\end{proof}
There is a downward bias in the estimated ATE when we use the trigger intensity computed from independent sampling. On the other hand, it leads to an upward bias in the variance of the ATE. But both of these biases are inversely proportional to the number of observations $(m)$. 

\begin{figure}
    \begin{subfigure}{0.45\textwidth}
        \centering
            \captionsetup{width=.95\linewidth}
            \includegraphics[width=0.95\textwidth]{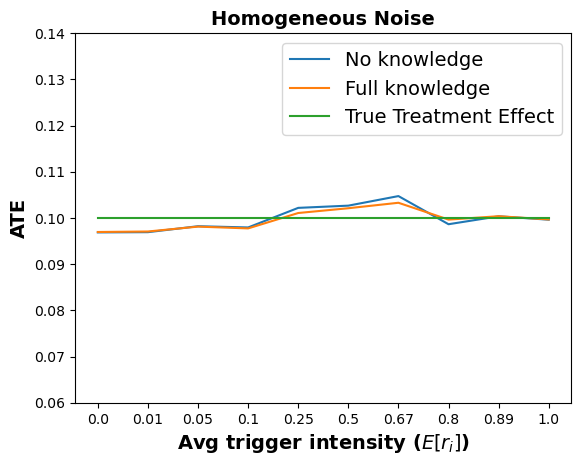}
            \caption{Both evaluation methods are unbiased when noise is homogeneous.}
    \end{subfigure}
    \begin{subfigure}{0.45\textwidth}
        \centering
            \captionsetup{width=.95\linewidth}
            \includegraphics[width=0.95\textwidth]{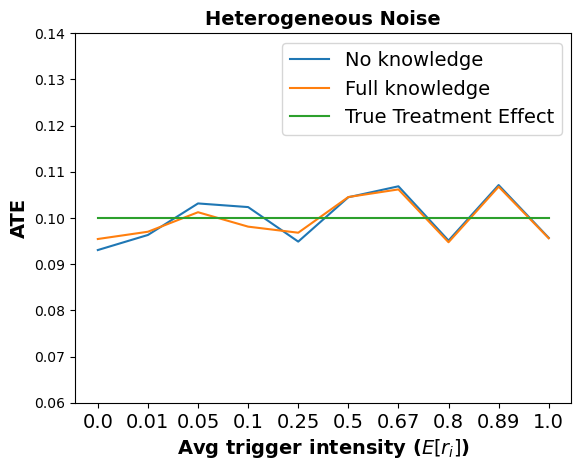}
            \caption{Both evaluation methods are unbiased when noise is heterogeneous.}
    \end{subfigure}
\caption{Evaluations with no knowledge and full knowledge of trigger intensity are unbiased irrespective of the noise characteristics.}
\label{fig:no_know_vs_full_know_ate}
\end{figure}

\begin{figure}
\begin{subfigure}{0.45\textwidth}
        \centering
            \captionsetup{width=.95\linewidth}
            \includegraphics[width=0.95\textwidth]{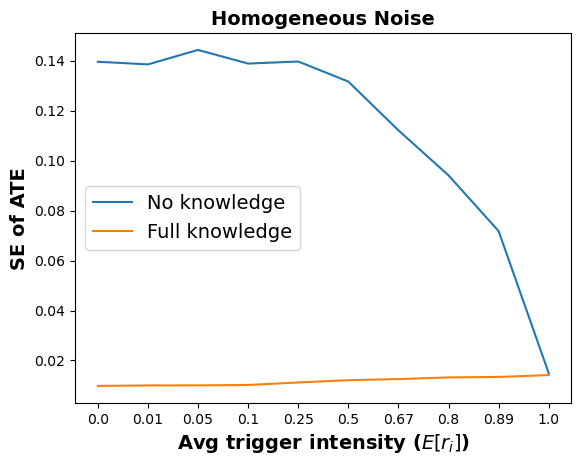}
            \caption{The estimated ATE with full knowledge of triggers is more precise, as indicated by the smaller SE under homogeneous noise.}
    \end{subfigure}
    \begin{subfigure}{0.45\textwidth}
        \centering
            \captionsetup{width=.95\linewidth}
            \includegraphics[width=0.95\textwidth]{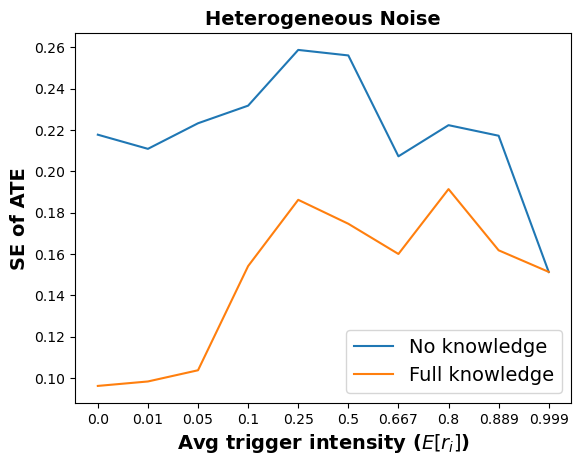}
            \caption{The estimated ATE with full knowledge of triggers is more precise, as indicated by the smaller SE under heterogeneous noise.}
    \end{subfigure}
    \caption{Evaluation with full knowledge of trigger intensity is more precise in comparison to evaluation with no knowledge of trigger intensity irrespective of the noise characteristics.}
    \label{fig:no_know_vs_full_know_ate_se}
\end{figure}

\begin{figure}
    \begin{subfigure}{0.44\textwidth}
	\centering
            \captionsetup{width=.95\linewidth}
		\includegraphics[width=0.95\textwidth]{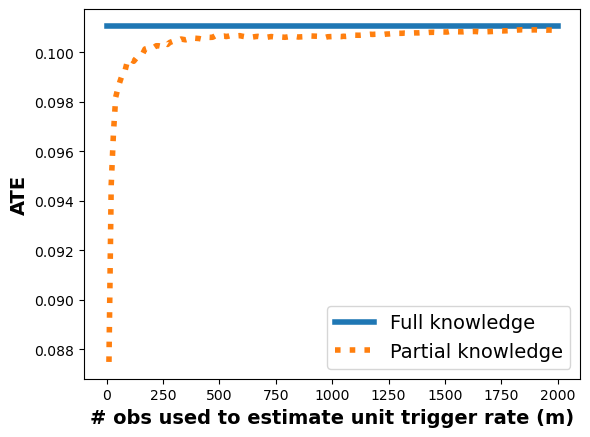}
	\caption{There is a negative bias in the estimated ATE with partial knowledge of trigger intensity. This bias is inversely proportional to the number of observations used to estimate trigger intensity increases.} 
	\label{fig:estimate_independent_prior}
    \end{subfigure}
    \begin{subfigure}{0.44\textwidth}
	\centering
            \captionsetup{width=.95\linewidth}
		\includegraphics[width=0.95\textwidth]{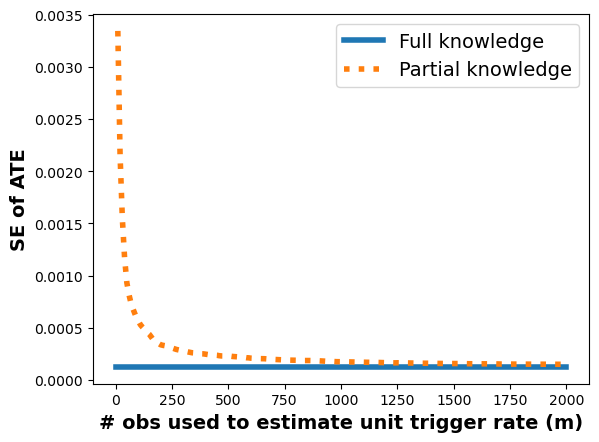}
	\caption{There is a positive bias in the variance of ATE with partial knowledge of trigger intensity. This bias is inversely proportional to the number of observations used to estimate trigger intensity increases.}
	\label{fig:estimate_hierarchical_prior}
    \end{subfigure}
\caption{Comparison of estimated values with full knowledge of trigger intensity and partial knowledge of trigger intensity.}
\label{fig:full_know_vs_part_know}
\end{figure}

\section{Simulation study}
We use simulation to validate the theoretical analysis. Simulation analysis provides further evidence for the following two claims: 1) the standard error of evaluation with full knowledge is smaller than the standard error of the baseline method; and 2) the bias in the evaluation with partial knowledge reduces as the number of observations for sampling increases.

We use 2000 experiment products, which are randomly divided into treatment and control groups. The number of observations for a product can be as high as 1 million. Here, only products that are part of the experiments are considered for analysis, as suggested by \cite{xu_infra_to_culture_ab_test, kohavi_practical_guide, deng_improve_sensitivity}. 

We also vary the noise characteristics (homogeneous vs heterogeneous) to check the robustness of the theoretical results. Homogeneous noise assumes measurement associated with all products have the same noise variance, i.e., noise is i.i.d. For the heterogeneous case, noise is independent, but its variance can change based on products.

\subsection{Full knowledge vs baseline}
Comparing baseline evaluation with full knowledge reveals both methods estimate ATE without bias (Figure \ref{fig:no_know_vs_full_know_ate}). The difference between the estimated ATEs from these two methods reduces as the trigger intensity increases $(> 0.8)$. These methods converge as trigger intensity increases above 0.8. With higher trigger rates, more observations are impacted, reducing the noise in baseline estimates.

Evaluation with full knowledge achieves lower standard error (Figure \ref{fig:no_know_vs_full_know_ate_se}), which increases precision and reduces Type II errors. This improved ability to detect small treatment effects is particularly valuable in industry settings, where subtle improvements in customer experience are common but difficult to validate statistically.

\subsection{Full knowledge vs partial knowledge}
We also evaluate the performance of the evaluation method when there is only partial knowledge of trigger intensity is available. Here, the trigger intensity of a product is estimated by inspecting a sample of observations related to that product. The error in the estimated product trigger intensity ($r_i^{\prime}$) depends on the number of observations ($m$) inspected per product. 

Figure \ref{fig:full_know_vs_part_know} shows the simulated results. As $m$ increases, the error in the estimated trigger intensity reduces, which generates more accurate (lower bias) and precise (lower standard error) estimate of ATE. In this simulation setup, bias in the estimated ATE from evaluation with partial knowledge becomes very small when $m$ is larger than 20 samples. Same thing happens for variance too. 

It should be mentioned that this threshold for the required number of samples will depend on the inherent noise variance and the true treatment effect.

\section{Comparison using real A/B experiments data}
In this section, we present the data from a real A/B experiment platform, where the partial knowledge evaluation method is deployed. In this analysis, a reasonable number of samples are used to compute the trigger intensity. Still, evaluation with partial knowledge reduces the standard error without any observable bias in the estimated ATE.

This A/B experiment platform is used regularly by an e-commerce retailer to measure the causal impact. Here, the treatment model may change the model parameters, update the list of input features, change the engineering infrastructure, etc. 

This platform does not provide the full trigger intensity information, as the cost for determining the trigger status for all observations is huge (in the order of tens of millions of dollars). Thus, we estimate the trigger intensity for a product by sampling a subset of observations and computing their trigger status. The number of observations $(m)$ used for trigger intensity estimation can vary from 10 to 70 samples.

We compare the performance of the evaluation with partial knowledge of trigger intensity and the baseline method. Both evaluation methods only consider those products that participate in an experiment, as suggested by \cite{xu_infra_to_culture_ab_test, kohavi_practical_guide, deng_improve_sensitivity}.

The data is collected over five months. There are 37 experiments and 92 treatments. Some experiments have more than one treatment. The maximum number of treatments for an experiment is six. Some of these experiments also ran in parallel. 
\begin{table}[h]
    \centering
    \begin{tabular}{|c|c|c|}
        \hline
        & Baseline & Partial knowledge \\
        \hline
        Avg standard Error & 0.1781 & 0.11305\\
        \hline
        Avg absolute t-val & 0.53203 & 0.98008\\
        \hline
        Number of statistically & &\\
        significant (90\%) treatments & 27 & 39\\
        \hline
        Number of statistically & &\\
        significant (95\%) treatments & 21 & 31\\
        \hline
    \end{tabular}
    \caption{Comparison of t-value, standard error, and statistically significant treatments for baseline and partial knowledge evaluation methods from an online A/B experiment platform. The t-vale and standard error are aggregated over multiple treatments.}
    \label{tab:cust_response}
    \vspace{-8mm}
\end{table}
\subsection{Reduction of standard error}
Overall, the standard error has reduced by 36.48\%. Because of this, the number of statistically significant (95\%) treatment has increased by 44.44\%. This is in line with the theoretical analysis. First, evaluation with perfect knowledge of trigger intensity has much smaller standard error in comparison to baseline method. Second, the bias in the standard error (\ref{th:ate_part_trig}) from evaluation with partial knowledge reduces as the number of observations used for trigger intensity estimation.

Most experiments use enough samples for trigger intensity estimation. As a result, the estimated standard error from evaluation with partial knowledge is very close to the evaluation with perfect knowledge, which is smaller in comparison to baseline method.

We also perform a paired t-test to check if the change in standard error is statistically significant. The p-value of the test is very close to zero, indicating the reduction in standard error is statistically significant. It is another evidence that shows the evaluation with partial knowledge is an effective method for reducing standard error and improving precision.

\subsection{No bias in estimated ATE}
The estimated ATE from evaluation with partial knowledge can be biased. The source of this bias is the estimation error in the trigger intensity. But this bias vanishes when we use enough samples for trigger intensity estimation.

We compare the estimated ATE values from two evaluation methods (baseline vs evaluation with partial knowledge) to check if any observable bias exists. The estimated ATE for evaluation with partial knowledge is slightly higher. 

However, this change is not statistically significant. We perform a paired t-test to determine the statistical significance of this difference. The p-value is 0.4. It means we can safely reject the hypothesis that estimated ATE from evaluation with partial knowledge is different from the baseline method.

We also count the number of times the confidence intervals of these two evaluation methods overlap. There are 92 treatments. For 91 treatments,  confidence intervals from two evaluation methods overlap each other. This is another indication that the difference in estimated ATE values are not statistically significant. 

In other words, there are enough samples used for trigger intensity estimation and there is no observable bias in the estimated ATE from evaluation with partial knowledge.
\begin{table}[h]
    \centering
    \begin{tabular}{|c|c|c|}
        \hline
         & Number of treatments \\
        \hline
        Total treatments & 92\\
        \hline
        95\% confidence intervals overlap & 91\\
        \hline
        90\% confidence intervals overlap & 91\\
        \hline
        Estimated ATEs have & \\
        the same sign & 70\\
        \hline
    \end{tabular}
    \caption{Comparison of estimated ATE for baseline and partial knowledge evaluation methods.}
    \label{tab:cust_response}
    \vspace{-5mm}
\end{table}

\section{Conclusion}
A/B experiments in industry typically have small treatment effects to minimize risks, often resulting in statistically insignificant results due to low signal-to-noise ratios. While focusing on trigger observations (where treatment and control outputs differ) can improve precision, identifying all triggers is resource-intensive. We propose a sampling-based evaluation method that reduces costs while maintaining effectiveness. Our theoretical analysis shows that sampling bias decreases inversely with sample size. Through simulations, we demonstrate that sampling just 0.1\% of observations effectively eliminates bias, while empirical testing shows a 36.48\% reduction in standard error.

\bibliographystyle{unsrt}
\bibliography{ref}

\appendix
\section{Proof of Theorem \ref{th:baseline_eval}}
We derive parameters of the baseline model. We use the matrix notation to represent the input parameters $\mathbf{X}$, which is a matrix of size $N \times 2$. $N$ is the number of observations and there are two parameters in the baseline model: a constant term and the treatment indicator. The treatment indicator is represented by a column vector $\mathbf{t}$  and the constant term is represented by $\mathbf{1}$. Both of them have the size of $N \times 1$. The observed outcome ($\mathbf{y}$) is a column vector of size $N \times 1$. The estimated parameters are denoted by vector $\boldsymbol{\hat{\alpha}} = \left[ \begin{matrix} \hat{\alpha}_0 & \hat{\alpha}_1\end{matrix} \right]$. The variance of the residuals is $\sigma^2(\hat{\eta_{\alpha}})$.

\subsection{Proof of Theorem \ref{th:baseline_eval}: part a}
\begin{align}
% X
&\mathbf{X} = \left[ \begin{matrix}
\mathbf{1} & \mathbf{t}
\end{matrix} \right] \\
% (1/N) (X^TX)
&\frac{1}{N}\mathbf{X}^{T} \mathbf{X} = \left[ \begin{matrix}
1 & E[\mathcal{T}_i] \\
E[\mathcal{T}_i] & E[\mathcal{T}_i^2]
\end{matrix} \right] = \left[ \begin{matrix}
1 & \frac{1}{2} \\
\frac{1}{2} & \frac{1}{2}
\end{matrix} \right]\\
% (1/N) (X^TX)^(-1)
&\left( \frac{1}{N}\mathbf{X}^{T} \mathbf{X} \right)^{-1} = 4 \left[ \begin{matrix}
\frac{1}{2} & -\frac{1}{2}\\
-\frac{1}{2} & 1
\end{matrix} \right]\\
% (1/N) (X^T y)
&\frac{1}{N} \mathbf{X}^{T} \mathbf{y} = \left[ \begin{matrix}
E[y_i]\\
E[\mathcal{T}_iy_i]
\end{matrix} \right] = \left[ \begin{matrix}
\frac{1}{2} \left( E[y_i | \mathcal{T}_i = 1] + E[y_i | \mathcal{T}_i = 0] \right)\\
\frac{1}{2} E[y_i | \mathcal{T}_i = 1]
\end{matrix} \right]\\
% \hat{\alpha}
&\boldsymbol{\hat{\alpha}} = \left( \frac{1}{N}\mathbf{X}^{T} \mathbf{X} \right)^{-1} \frac{1}{N} \mathbf{X}^{T} \mathbf{y} = \left[ \begin{matrix}
E[y_i | \mathcal{T}_i = 0]\\
E[y_i | \mathcal{T}_i = 1] - E[y_i | \mathcal{T}_i = 0]
\end{matrix} \right]\\
\end{align}
Hence, $\hat{\alpha}_1 = E[y_i | \mathcal{T}_i = 1] - E[y_i | \mathcal{T}_i = 0]$

\subsection{Proof of Theorem \ref{th:baseline_eval}: part b}
\begin{align}
    % \sigma^2(\hat{\alpha})
&\sigma^2(\boldsymbol{\hat{\alpha}}) = \text{diag} \left( \frac{1}{N}\mathbf{X}^{T} \mathbf{X} \right)^{-1} \sigma^2(\hat{\eta_{\alpha}}) =  \left[ \begin{matrix}
2 \sigma^2(\hat{\eta_{\alpha}}) & 4 \sigma^2(\hat{\eta_{\alpha}})
\end{matrix} \right]
\end{align}
Hence, $\sigma^2(\hat{\alpha}_1) = 4 \sigma^2(\hat{\eta_{\alpha}}]$

\section{Proof of Theorem \ref{th:eval_all_obs_equal_and_full_trig_1}}
Trigger intensity is represented by a column vector $\mathbf{r}$ of size $N \times 1$. $\odot$ is a symbol for element wise multiplication. Assume, $E[r_i] = m$, $E[r_i^2] = s$, and $\sigma^2(r_i) = v$. The estimated parameters are $\hat{\beta}$.
\begin{align}
% X
&\mathbf{X} = \left[ \begin{matrix}
\mathbf{1} & \mathbf{r} & \mathbf{r} \odot \mathbf{t}
\end{matrix} \right] \\
% (1/N) (X^TX)
&\frac{1}{N}\mathbf{X}^{T} \mathbf{X} = \left[ \begin{matrix}
1 & E[r_i] & E[\mathcal{T}_ir_i]\\
E[r_i] & E[r_i^2] & E[\mathcal{T}_ir_i^2]\\
E[\mathcal{T}_ir_i] & E[\mathcal{T}_ir_i^2] & E[\mathcal{T}_i^2r_i^2]
\end{matrix} \right] = \left[ \begin{matrix}
1 & m & \frac{m}{2}\\
m & s & \frac{s}{2}\\
\frac{m}{2} & \frac{s}{2} & \frac{s}{2}
\end{matrix} \right]\\
% (1/N) (X^TX)^(-1)
&\left( \frac{1}{N}\mathbf{X}^{T} \mathbf{X} \right)^{-1} = \left[ \begin{matrix}
\frac{s}{v} & -\frac{m}{v} & 0\\
-\frac{m}{v} & \frac{1}{v} + \frac{1}{s} & -\frac{2}{s}\\
0 & -\frac{2}{s} & \frac{4}{s}
\end{matrix} \right]\\
% (1/N) (X^T y)
&\frac{1}{N} \mathbf{X}^{T} \mathbf{y} = \left[ \begin{matrix}
E[y_i]\\
E[r_iy_i]\\
E[\mathcal{T}_ir_iy_i]
\end{matrix} \right] = \left[ \begin{matrix}
\frac{1}{2} \left( E[y_i | \mathcal{T}_i = 1] + E[y_i | \mathcal{T}_i = 0] \right)\\
\frac{1}{2} \left( E[r_iy_i | \mathcal{T}_i = 1] + E[r_iy_i | \mathcal{T}_i = 0] \right)\\
\frac{1}{2}E[r_iy_i | \mathcal{T}_i = 1]
\end{matrix} \right]\\
%\hat{\beta}
&\boldsymbol{\hat{\beta}} = \left( \frac{1}{N}\mathbf{X}^{T} \mathbf{X} \right)^{-1} \frac{1}{N} \mathbf{X}^{T} \mathbf{y} \\
&= \left[ \begin{matrix}
\frac{s}{v} E[y_i] - \frac{m}{v}E[r_iy_i]\\
-\frac{m}{v}E[y_i] + \frac{1}{v} E[r_iy_i] - \frac{1}{2s} \left( E[r_iy_i | \mathcal{T}_i = 1] - E[r_iy_i | \mathcal{T}_i = 0] \right)\\
\frac{1}{s} \left( E[r_iy_i | \mathcal{T}_i = 1] - E[r_iy_i | \mathcal{T}_i = 0] \right)
\end{matrix} \right]\\
%\sigma^2(\hat{\beta})
&\sigma^2(\boldsymbol{\hat{\beta}}) = \text{diag} \left( \frac{1}{N}\mathbf{X}^{T} \mathbf{X} \right)^{-1} \sigma^2 (\hat{\eta}^2) \\
&= \left[ \begin{matrix}
\frac{s}{v} \sigma^2 (\hat{\eta}^2) & \left( \frac{1}{v} + \frac{1}{s} \right) \sigma^2 (\hat{\eta}^2) & \frac{4}{s} \sigma^2 (\hat{\eta}^2)
\end{matrix} \right]
\end{align}

\section{Proof of Theorem \ref{th:ate_part_trig}}
Estimated product trigger intensity is represented by a column vector $\mathbf{r}^{\prime}$ of size $N \times 1$. The estimation error for product trigger intensity is $\boldsymbol{\epsilon}$, which is a column vector of size $N \times 1$. Hence, $\mathbf{r}^{\prime} = \mathbf{r} + \boldsymbol{\epsilon}$. We can write
\begin{align}
&\mathbf{E} = \left[ \begin{matrix}
 \mathbf{0} & \mathbf{e} & \mathbf{e} \odot \mathbf{t}
 \end{matrix} \right]\\
 &\tilde{\mathbf{X}} = \mathbf{X} + \mathbf{E}
\end{align}
Assume, $E[r_i^{\prime}] = m^{\prime}$, $E[(r_i^{\prime})^2] = s^{\prime}$, $\sigma^2(r_i^{\prime}) = v^{\prime}$, $E[\epsilon_i] = e$, $E[\epsilon_i^2] = q$, and $E[\epsilon_i r_i^{\prime}] = k$. The estimated parameters are $\hat{\beta^{\prime}}$.

\subsection{Proof of Theorem \ref{th:ate_part_trig}: part a}
\begin{align}
%X~
&\tilde{\mathbf{X}} = \left[ \begin{matrix}
\mathbf{1} & \mathbf{r^{\prime}} & \mathbf{r^{\prime}} \odot \mathbf{t}
\end{matrix} \right] \\
% (1/N) (E^T E)
% \label{eqn:est_noise_transpose_est_noise}
% &\frac{1}{N}\mathbf{E}^{T} \mathbf{E} = \left[ \begin{matrix}
% 0 & 0 & 0\\
% 0 & E[\epsilon_i^2] & E[T_i \epsilon_i^2]\\
% 0 & E[T_i \epsilon_i^2] & E[T_i^2 \epsilon_i^2]
% \end{matrix} \right] = \left[ \begin{matrix}
% 0 & 0 & 0\\
% 0 & q & \frac{q}{2}\\
% 0 & \frac{q}{2} & \frac{q}{2}
% \end{matrix} \right]\\
% (1/N) (X~^TX)
&\frac{1}{N}\tilde{\mathbf{X}^{T}} \tilde{\mathbf{X}} = \left[ \begin{matrix}
1 & E[r_i^{\prime}] & E[\mathcal{T}_ir_i^{\prime}]\\
E[r_i^{\prime}] & E[{r_i^{\prime}}^2] & E[\mathcal{T}_i{r_i^{\prime}}^2]\\
E[\mathcal{T}_i r_i^{\prime}] & E[\mathcal{T}_i {r_i^{\prime}}^2] & E[\mathcal{T}_i^2 {r_i^{\prime}}^2]
\end{matrix} \right] = \left[ \begin{matrix}
1 & m^{\prime} & \frac{m^{\prime}}{2}\\
m^{\prime} & s^{\prime} & \frac{s^{\prime}}{2}\\
\frac{m^{\prime}}{2} & \frac{s^{\prime}}{2} & \frac{s^{\prime}}{2}
\end{matrix} \right]\\
% (1/N) (X~^TX~)^(-1)
\label{eqn:est_trig_rate_1}
&\left( \frac{1}{N}\tilde{\mathbf{X}}^{T} \tilde{\mathbf{X}} \right)^{-1} = \left[ \begin{matrix}
\frac{s^{\prime}}{v^{\prime}} & -\frac{m^{\prime}}{v^{\prime}} & 0\\
-\frac{m^{\prime}}{v^{\prime}} & \frac{1}{v^{\prime}} + \frac{1}{s^{\prime}} & -\frac{2}{s^{\prime}}\\
0 & -\frac{2}{s^{\prime}} & \frac{4}{s^{\prime}}
\end{matrix} \right]\\
%\hat{\beta^{\prime}}
\label{eqn:beta_hat_prime}
&\boldsymbol{\hat{\beta^{\prime}}} = \left( \frac{1}{N} \tilde{\mathbf{X}}^{T} \tilde{\mathbf{X}} \right)^{-1} \frac{1}{N} \tilde{\mathbf{X}}^{T} \mathbf{y}\\
&= \left[ \begin{matrix}
\frac{s^{\prime}}{v^{\prime}} E[y_i] - \frac{m^{\prime}}{v^{\prime}}E[r_i^{\prime}y_i]\\
-\frac{m^{\prime}}{v^{\prime}}E[y_i] + \frac{1}{v^{\prime}} E[r_i^{\prime}y_i] - \frac{1}{2s^{\prime}} \left( E[r_i^{\prime}y_i | \mathcal{T}_i = 1] - E[r_i^{\prime}y_i | \mathcal{T}_i = 0] \right)\\
\frac{1}{s^{\prime}} \left( E[r_i^{\prime}y_i | \mathcal{T}_i = 1] - E[r_i^{\prime}y_i | \mathcal{T}_i = 0] \right)
\end{matrix} \right]
\end{align}
From Eqn \ref{eqn:beta_hat_prime}, we can say that
\begin{align}
\hat{\beta_2^{\prime}} = \frac{1}{E[(r_i^{\prime})^2]} \left[ E[y_i r_i^{\prime} | \mathcal{T}_i = 1] - E[y_i r_i^{\prime} | \mathcal{T}_i = 0] \right]
\end{align}

\subsection{Proof of Theorem \ref{th:ate_part_trig}: part b}
\begin{align}
% (1/N) X~T E
\label{eqn:est_trig_rate_2}
&\frac{1}{N} \tilde{\mathbf{X}}^{T} \mathbf{E} = \left[ \begin{matrix}
0 & E[\epsilon_i] & E[\mathcal{T}_i \epsilon_i]\\
0 & E[r_i^{\prime} \epsilon_i] & E[\mathcal{T}_i r_i^{\prime} \epsilon_i]\\
0 & E[\mathcal{T}_i r_i^{\prime} \epsilon_i] & E[\mathcal{T}_i r_i^{\prime} \epsilon_i]
\end{matrix} \right] = \left[ \begin{matrix}
0 & e & \frac{e}{2}\\
0 & k & \frac{k}{2}\\
0 & \frac{k}{2} & \frac{k}{2}
\end{matrix} \right]\\
% (1/N) (X~^TX~)^(-1) x (1/N) X~T E
\label{eqn:true_param_w_est_noise}
&\left( \frac{1}{N} \tilde{\mathbf{X}}^{T} \tilde{\mathbf{X}} \right)^{-1} \frac{1}{N} \tilde{\mathbf{X}}^{T} \mathbf{E} = \left[ \begin{matrix}
0 & \frac{s^{\prime}e - m^{\prime}k}{v^{\prime}} & \frac{s^{\prime}e - m^{\prime}k}{2v^{\prime}}\\
0 & \frac{k - e m^{\prime}}{v^{\prime}} & \frac{k - e m^{\prime}}{2v^{\prime}} - \frac{k}{s^{\prime}}\\
0 & 0 & \frac{k}{s^{\prime}}
\end{matrix} \right]
\end{align}

The true parameters are $\boldsymbol{\beta}$ and the noise is $\boldsymbol{\eta}$. 
\begin{align}
\label{eqn:data_gen_proc_mat}
\mathbf{y} = \mathbf{X} \boldsymbol{\beta} + \boldsymbol{\eta}
\end{align}
\begin{align}
\label{eqn:beta_hat_prime_mat}
\hat{\boldsymbol{\beta}^{\prime}} &= \left( \tilde{\mathbf{X}}^{T} \tilde{\mathbf{X}} \right)^{-1} \tilde{\mathbf{X}}^{T} \mathbf{y}\\
&= \left( \tilde{\mathbf{X}}^{T} \tilde{\mathbf{X}} \right)^{-1} \tilde{\mathbf{X}}^{T} \left( \mathbf{X} \boldsymbol{\beta} + \boldsymbol{\eta} \right)\\
& = \left( \tilde{\mathbf{X}}^{T} \tilde{\mathbf{X}} \right)^{-1} \tilde{\mathbf{X}}^{T} \left( (\tilde{\mathbf{X}} - \mathbf{E}) \boldsymbol{\beta} + \boldsymbol{\eta} \right)\\
&= \boldsymbol{\beta} - \left( \tilde{\mathbf{X}}^{T} \tilde{\mathbf{X}} \right)^{-1} \tilde{\mathbf{X}}^{T} \mathbf{E} \boldsymbol{\beta} + \left( \tilde{\mathbf{X}}^{T} \tilde{\mathbf{X}} \right)^{-1} \tilde{\mathbf{X}}^{T} \boldsymbol{\eta}
\end{align}

We can use Eq~\ref{eqn:true_param_w_est_noise} to re-write Eq \ref{eqn:beta_hat_prime_mat} as
\begin{align}
\label{eqn:beta_hat_prime_mat_bias}
    \hat{\boldsymbol{\beta}^{\prime}} &= \left[ \begin{matrix}
1 & -\frac{s^{\prime}e - m^{\prime}k}{v^{\prime}} & -\frac{s^{\prime}e - m^{\prime}k}{2v^{\prime}}\\
0 & 1 - \frac{k - e m^{\prime}}{v^{\prime}} & -\frac{k - e m^{\prime}}{2v^{\prime}} + \frac{k}{s^{\prime}}\\
0 & 0 & 1 - \frac{k}{s^{\prime}} 
\end{matrix} \right] \boldsymbol{\beta} + \left( \tilde{\mathbf{X}}^{T} \tilde{\mathbf{X}} \right)^{-1} \tilde{\mathbf{X}}^{T} \boldsymbol{\eta}
\end{align}
Hence, $E[\hat{\beta_2^{\prime}}] = \beta_2 \left[ 1 - \frac{ E[\epsilon_i r_i^{\prime}]}{E[(r_i^{\prime})^2]} \right]$

\subsection{Proof of Theorem \ref{th:ate_part_trig}: part c}
The residuals for OLS using estimated trigger intensity is
\begin{align}
\hat{\boldsymbol{\eta^{\prime}}} &= \mathbf{y} - \tilde{\mathbf{X}}\hat{\boldsymbol{\beta}}^{\prime}\\
&= \boldsymbol{\eta} + \mathbf{X} \boldsymbol{\beta} - \tilde{\mathbf{X}}\hat{\boldsymbol{\beta}}^{\prime} \; \text{from Eq \ref{eqn:data_gen_proc_mat}}\\
&= \boldsymbol{\eta} + \mathbf{X} \boldsymbol{\beta} - \tilde{\mathbf{X}} \left( \boldsymbol{\beta} - \left( \tilde{\mathbf{X}}^{T} \tilde{\mathbf{X}} \right)^{-1} \tilde{\mathbf{X}}^{T} \mathbf{E} \boldsymbol{\beta} + \left( \tilde{\mathbf{X}}^{T} \tilde{\mathbf{X}} \right)^{-1} \tilde{\mathbf{X}}^{T} \boldsymbol{\eta} \right)\\
&= \boldsymbol{\eta} + \left( \tilde{\mathbf{X}} - \mathbf{E} \right) \boldsymbol{\beta} - \tilde{\mathbf{X}} \left( \boldsymbol{\beta} - \left( \tilde{\mathbf{X}}^{T} \tilde{\mathbf{X}} \right)^{-1} \tilde{\mathbf{X}}^{T} \mathbf{E} \boldsymbol{\beta} + \left( \tilde{\mathbf{X}}^{T} \tilde{\mathbf{X}} \right)^{-1} \tilde{\mathbf{X}}^{T} \boldsymbol{\eta} \right)\\
&= \boldsymbol{\eta} - \tilde{\mathbf{X}} \left( \tilde{\mathbf{X}}^{T} \tilde{\mathbf{X}} \right)^{-1} \tilde{\mathbf{X}}^{T} \boldsymbol{\eta} - \mathbf{E} \boldsymbol{\beta} + \tilde{\mathbf{X}} \left( \tilde{\mathbf{X}}^{T} \tilde{\mathbf{X}} \right)^{-1} \tilde{\mathbf{X}}^{T} \mathbf{E} \boldsymbol{\beta} \\
\label{eqn:noise_with_expand}
&= \left( \mathbf{I}_{N \times N} -  \tilde{\mathbf{X}} \left( \tilde{\mathbf{X}}^{T} \tilde{\mathbf{X}} \right)^{-1} \tilde{\mathbf{X}}^{T} \right) \boldsymbol{\eta} - \left( \mathbf{I}_{N \times N} -  \tilde{\mathbf{X}} \left( \tilde{\mathbf{X}}^{T} \tilde{\mathbf{X}} \right)^{-1} \tilde{\mathbf{X}}^{T} \right) \mathbf{E} \boldsymbol{\beta}
\end{align}
Suppose $\mathbf{A} = \left( \mathbf{I}_{N \times N} -  \tilde{\mathbf{X}} \mathbf{M} \right)$. $\mathbf{A}$ is a special matrix. It satisfies the following properties.
\begin{itemize}
    \item $\mathbf{A}$ is a symmetric matrix.
    \item $\mathbf{A}$ is idempotent i.e., $\mathbf{A}^T \mathbf{A} = \mathbf{A}$.
    \item $\mathbf{A}$ has $N - 3$ non-zero eigenvalues and 3 zero eigenvalues. All of its non-zero eigenvalues are 1. Hence $\text{tr} \left( \mathbf{A} \right) = N - 3$.
\end{itemize}
We can write Eqn \ref{eqn:noise_with_expand} as
\begin{align}
\hat{\boldsymbol{\eta^{\prime}}} = \mathbf{A} \left( \boldsymbol{\eta} -  \mathbf{E} \boldsymbol{\beta} \right)
\end{align}
Now, the expected value of the residuals is zero, so 
\begin{align}
    \sigma^2(\hat{\eta^{\prime}}) = \frac{1}{N} E[\hat{\boldsymbol{\eta^{\prime}}}^{T} \hat{\boldsymbol{\eta^{\prime}}}]
\end{align}

\begin{align}
E[\hat{\boldsymbol{\eta^{\prime}}}^{T} \hat{\boldsymbol{\eta^{\prime}}}] &= E[\left( \mathbf{A} \left( \boldsymbol{\eta} -  \mathbf{E} \boldsymbol{\beta} \right) \right)^T \left( \mathbf{A} \left( \boldsymbol{\eta} -  \mathbf{E} \boldsymbol{\beta} \right) \right)]\\
&= E[\boldsymbol{\eta}^T \mathbf{A}^T \mathbf{A} \boldsymbol{\eta}] + E[\boldsymbol{\beta}^T \mathbf{E}^T \mathbf{A}^T \mathbf{A} \mathbf{E} \boldsymbol{\beta}]\\
\label{eqn:residual_exapnsion}
&= E[\boldsymbol{\eta}^T \mathbf{A} \boldsymbol{\eta}] + \boldsymbol{\beta}^T E[ \mathbf{E}^T \mathbf{A} \mathbf{E} ] \boldsymbol{\beta}
\end{align}

We can show that
\begin{align}
\label{eqn:e_a_e^t}
    E[ \mathbf{E}^T \mathbf{A} \mathbf{E} ] &= \text{tr}(A) \left[ \begin{matrix}
0 & 0 & 0 \\
0 & q & \frac{q}{2} \\
0 & \frac{q}{2} & \frac{q}{2}
\end{matrix} \right]
\end{align}

Based on Eqn \ref{eqn:e_a_e^t}, we can write Eqn \ref{eqn:residual_exapnsion} as
\begin{align}
    E[\hat{\boldsymbol{\eta^{\prime}}}^{T} \hat{\boldsymbol{\eta^{\prime}}}] = \sigma^2(\eta) \text{tr}(A) + \left[ \left( \beta_1 + \frac{\beta_2}{2} \right)^2 + \frac{\beta_2^2}{4} \right] q \text{tr}(A)
\end{align}

When $N$ is very large, we can approximate $\sigma^2(\hat{\eta^{\prime}})$ as
\begin{align}
    \sigma^2(\hat{\eta^{\prime}}) &\approx \sigma^2(\eta) + \left[ \left( \beta_1 + \frac{\beta_2}{2} \right)^2 + \frac{\beta_2^2}{4} \right] q
\end{align}

\subsection{Proof of Theorem \ref{th:ate_part_trig}: part d}
Suppose, $\sigma^2 (\hat{\eta^{\prime}})$ is the variance of the residuals, then
\begin{align}
%\sigma^2(\hat{\beta^{\prime}})
&\sigma^2(\boldsymbol{\hat{\beta^{\prime}}}) = \text{diag} \left( \frac{1}{N} \tilde{\mathbf{X}}^{T} \tilde{\mathbf{X}} \right)^{-1} \sigma^2 (\hat{\eta^{\prime}})\\
&= \left[ \begin{matrix}
\frac{s^{\prime}}{v^{\prime}} \sigma^2 (\hat{\eta^{\prime}}) & \left( \frac{1}{v^{\prime}} + \frac{1}{s^{\prime}} \right) \sigma^2 (\hat{\eta^{\prime}}) & \frac{4}{s^{\prime}} \sigma^2 (\hat{\eta^{\prime}})
\end{matrix} \right]
\end{align}
Hence, $\sigma^2 \left( \hat{\beta_2^{\prime}} \right) = \frac{4}{E[r_i^{{\prime}^2}]}  \sigma^2 (\hat{\eta^{\prime}})$

\section{Proof of Theorem \ref{th:ub_ate_part_trig}}
Here, $E[\epsilon_i] = 0$ and $E[\epsilon_i r_i] = 0$. Thus, $E[r_i^{\prime}] = E[r_i]$ and $E[r_i^{{\prime}^2}] = E[r_i^2] + E[\epsilon_i^2]$
\subsection{Proof of Theorem \ref{th:ub_ate_part_trig}: part a}
\begin{proof}
    We can re-write the bias in estimated $\hat{\rho^{\prime}}$.
    \begin{align}
        \rho - E[\hat{\rho^{\prime}}] &= \beta_2 \left(\frac{ E[\epsilon_i r_i^{\prime}]E[r_i^{\prime}]}{E[(r_i^{\prime})^2]} - E[\epsilon_i] \right)\\
        &= \beta_2 \frac{ E[\epsilon_i r_i^{\prime}]E[r_i^{\prime}]}{E[(r_i^{\prime})^2]}\\
        &= \beta_2 \frac{E[\epsilon_i^2] E[r_i]}{E[r_i^2] + E[\epsilon_i^2]}\\
        &= \beta_2 \frac{(E[r_i] - E[r_i^2])E[r_i]}{(m - 1) E[r_i^2] + E[r_i]} \text{ from Lemma \ref{lemma:ub_ate_part_trig_1}}\\
        &< \beta_2 \frac{E[r_i]^2}{(m - 1) E[r_i^2] + E[r_i]}\\
        &< \beta_2 \frac{E[r_i]^2}{(m - 1) E[r_i^2]}\\
        &< \frac{\beta_2}{m - 1}
    \end{align}
\end{proof}

\subsection{Proof of Theorem \ref{th:ub_ate_part_trig}: part b}
\begin{align}
&\sigma^2 \left( \hat{\rho}^{\prime} \right) = \frac{4 E[r_i^{\prime}]^2}{E[r_i^{{\prime}^2}]}  \sigma^2 (\hat{\eta^{\prime}})\\
&= \frac{4 E[r_i]^2}{E[r_i^2] + E[\epsilon_i^2]} \sigma^2 (\hat{\eta^{\prime}})\\
&= \frac{4 E[r_i]^2}{E[r_i^2] + E[\epsilon_i^2]} \left[ \sigma^2(\eta) + \left[ \left( \beta_1 + \frac{\beta_2}{2} \right)^2 + \frac{\beta_2^2}{4} \right] E[\epsilon_i^2] \right]\\
&< \frac{4 E[r_i]^2}{E[r_i^2]} \left[ \sigma^2(\eta) + \left[ \left( \beta_1 + \frac{\beta_2}{2} \right)^2 + \frac{\beta_2^2}{4} \right] E[\epsilon_i^2] \right]\\
&= \sigma^2 \left( \hat{\rho}\right) + \frac{4 E[r_i]^2}{E[r_i^2]} \left[ \left( \beta_1 + \frac{\beta_2}{2} \right)^2 + \frac{\beta_2^2}{4} \right] E[\epsilon_i^2],  \sigma^2(\eta) = \sigma^2(\hat{\eta})\\
&< \sigma^2 \left( \hat{\rho}\right) + 4 E[\epsilon_i^2] \left[ \left( \beta_1 + \frac{\beta_2}{2} \right)^2 + \frac{\beta_2^2}{4} \right], \frac{ E[r_i]^2}{E[r_i^2]} < 1 \notag\\
&\text{, when } \sigma^2 (r_i) > 0 \\
&= \sigma^2 \left( \hat{\rho}\right) + 4 \frac{E[r_i] - E[r_i^2]}{m} \left[ \left( \beta_1 + \frac{\beta_2}{2} \right)^2 + \frac{\beta_2^2}{4} \right] \notag\\
& \text{ from Lemma \ref{lemma:ub_ate_part_trig_1}}\\
&< \sigma^2 \left( \hat{\rho}\right) + \frac{1}{m} \left[ \left( \beta_1 + \frac{\beta_2}{2} \right)^2 + \frac{\beta_2^2}{4} \right], E[r_i] - E[r_i^2] \le \frac{1}{4}\\
&\sigma^2 \left( \hat{\rho}^{\prime} \right) - \sigma^2 \left( \hat{\rho}\right) < \frac{1}{m} \left[ \left( \beta_1 + \frac{\beta_2}{2} \right)^2 + \frac{\beta_2^2}{4} \right]
\end{align}
\end{document}